\documentclass[10pt,letterpaper]{IEEEtran}
\usepackage{amsmath,amssymb,amsthm,epsfig,color,subfigure,empheq,graphicx,graphics,balance}
\usepackage{enumerate,url,algorithm,algorithmic,wasysym,epstopdf,enumitem,comment}
\usepackage{xcolor}
\usepackage{amsfonts}
\usepackage{adjustbox}
\usepackage{multirow}

\usepackage{accents}

\newtheorem{assumption}{Assumption}

\newtheorem{lemma}{Lemma}

\newtheorem{theorem}{Theorem}

\newcommand\cmr[1]{\textcolor{red}{#1}}

\newcommand \bzero{\mathbf{0}}
\newcommand \bone{\mathbf{1}}
\newcommand \ba{\mathbf{a}}

\newcommand \bc{\mathbf{c}}
\newcommand \bd{\mathbf{d}}
\newcommand \be{\mathbf{e}}
\newcommand \bef{\mathbf{f}} 
\newcommand \bg{\mathbf{g}}

\newcommand \bi{\mathbf{i}}

\newcommand \bp{\mathbf{p}}
\newcommand \bq{\mathbf{q}}

\newcommand \bs{\mathbf{s}}

\newcommand \bu{\mathbf{u}}
\newcommand \bv{\mathbf{v}}
\newcommand \bw{\mathbf{w}}

\newcommand \bz{\mathbf{z}}

\newcommand \bB{\mathbf{B}}
\newcommand \bC{\mathbf{C}}

\newcommand \bG{\mathbf{G}}

\newcommand \bI{\mathbf{I}}

\newcommand \bS{\mathbf{S}}

\newcommand \bX{\mathbf{X}}
\newcommand \bY{\mathbf{Y}}

\newcommand \btheta{\boldsymbol{\theta}}

\newcommand \mcC{\mathcal{C}}

\newcommand \mcG{\mathcal{G}}

\newcommand \mcL{\mathcal{L}}

\newcommand \mcN{\mathcal{N}}
\newcommand \mcO{\mathcal{O}}
\newcommand \mcP{\mathcal{P}}

\newcommand \mcR{\mathcal{R}}
\newcommand \mcS{\mathcal{S}}

\newcommand \mcV{\mathcal{V}}

\newcommand \bmcS{\bar{\mathcal{S}}}

\newcommand \tba{\tilde{\mathbf{a}}}

\newcommand \tbp{\tilde{\mathbf{p}}}

\newcommand \tbA{\tilde{\mathbf{A}}}
\newcommand \tbB{\tilde{\mathbf{B}}}
\newcommand \tbC{\tilde{\mathbf{C}}}

\newcommand \tbG{\tilde{\mathbf{G}}}

\newcommand \tbM{\tilde{\mathbf{M}}}

\newcommand \tbtheta{\tilde{\boldsymbol{\theta}}}

\newcommand \hbd{\hat{\mathbf{d}}}

\newcommand \hbp{\hat{\mathbf{p}}}

\newcommand \hbv{\hat{\mathbf{v}}}

\newcommand \bbi{\bar{\mathbf{i}}}

\newcommand \bbq{\bar{\mathbf{q}}}

\newcommand \bbv{\bar{\mathbf{v}}}

\newcommand \bbY{\bar{\mathbf{Y}}}

\DeclareMathOperator{\ptdf}{PTDF}

\begin{document}
\title{Expediting AC Contingency Analysis using a Basecase Machine Learning Model}

\author{Md Obaidur Rahman,~\IEEEmembership{Student Member,~IEEE}, and
        Vassilis Kekatos,~\IEEEmembership{Senior Member,~IEEE}	
}
    
\markboth{IEEE TRANSACTIONS ON POWER SYSTEMS (submitted \today)}{IEEE TRANSACTIONS ON POWER SYSTEMS (submitted \today)}

\maketitle

\begin{abstract}
Grid planning and operation under increasingly variable operating conditions require fast and accurate AC power flow (AC-PF) contingency analysis for numerous line outages. While ML models can accelerate these computations, existing approaches often require either training separate models for different contingencies or a single model using data from multiple topologies. Both approaches incur substantial offline data generation and training costs. To address this gap, this work proposes a fixed-point framework that reuses a single ML model trained exclusively on basecase topology data to predict post-contingency AC-PF states under any non-critical single-line outage. We analyze the proposed method's convergence under the DC model approximation and derive its convergence rate in terms of network parameters. As a side result, we show that power transfer distribution factors (PTDFs) for non-critical lines have magnitudes less than one. We further characterize the ML training region to accommodate PF specifications encountered during fixed-point iterations and derive bounds on ML prediction errors for post-contingency state estimates. Numerical tests on a 6,717-bus synthetic Texas system demonstrate convergence across all non-critical single-line outages, with prediction errors remaining close to those of basecase. The proposed framework offers a favorable tradeoff between the speed of DC solvers and Newton-Raphson accuracy.\end{abstract}

\begin{IEEEkeywords}
Contingency analysis, PTDF, fixed-point iteration, basecase topology, deep neural networks, line outage.
\end{IEEEkeywords}

\section{Introduction}\label{sec:intro}
\allowdisplaybreaks
Contingency analysis is a fundamental yet computationally formidable task in power system operation and planning. It assesses the system's ability to withstand credible component outages that may lead to line overloads, voltage instability, or cascading failures. A power system is deemed $N-k$ secure if it can withstand any credible failure of $k$ major components, such as transmission lines, transformers, or generators~\cite{NERC2010}. In its preventive form, contingency analysis ensures that the system remains secure under any contingency without requiring any adjustments. In its corrective form, the analysis provides an action plan for each contingency, assuming the operator has sufficient time to respond. The rapid integration of data center loads and inverter-interfaced distributed energy resources intensifies the need for increasingly refined AC contingency analyses. This work shows how a system operator can repurpose a single ML model trained to predict basecase AC-PF states to compute $N-1$ post-contingency AC-PF states.  

To expedite contingency analysis, substantial work has studied sensitivity-based approximations of post-contingency power flows. Power transfer distribution factors (PTDFs) and line outage distribution factors (LODFs) provide closed-form estimates of active power flows under the DC-PF model~\cite{wood2013power}. Subsequent studies developed more efficient procedures for evaluating these factors. For example, \cite{guler2007generalized}, and \cite{castelli2024improved} derived direct LODF formulations and methods that exploit the sparsity of grid topologies. Reference~\cite{ZhGi12} leverages the sparse representation of line outages to detect topology changes. These methods are well suited for rapid contingency screening, but their reliance on the DC-PF approximation precludes an accurate representation of voltage magnitudes, network losses, and apparent power limits.

To estimate these quantities while preserving computational efficiency, researchers have developed linearized approximations of the AC-PF equations. Researchers have used AC sensitivity factors for security assessment under uncertain operating conditions~\cite{qu2015uncertainty}. Reference~\cite{dhople2015linear} derived linear approximations of the AC-PF equations in rectangular coordinates, and \cite{coffrin2014linear} developed a linear-programming approximation of AC power flow. Although such approaches are more accurate than the DC-PF model, their performance depends heavily on the linearization point and the reference topology. To avoid linearization inaccuracies, reference~\cite{yu2024efficient} suggested warm-starting the post-contingency AC-PF solver from the pre-contingency state. In fact, the inverse PF Jacobian required in the first Newton-Raphson (NR) iteration for the post-contingency state can be efficiently computed from the pre-contingency inverse PF Jacobian~\cite{yeung2017amps}. Holomorphic embedding has also been applied to AC contingency analysis to improve convergence near stressed operating conditions where NR methods may fail to converge~\cite{yao2021contingency}. Although these methods improve the efficiency or robustness of post-contingency AC-PF solvers, contingency analysis still requires repeated solutions across many loading and outage scenarios. This computational burden has motivated data-driven surrogate models.


Among these data-driven models, DNN-based approaches have been developed for learning optimal power flow (OPF) solutions~\cite {zamzam2020learning,pan2021deepopf,singh2022learning}, corrective actions~\cite{zhou2022scalable}, state estimation and in other grid design and security assessment tasks \cite{anna,gupta2023optimal,chen2020learning,kody2023learning}. Physics-guided NNs incorporate the AC-PF equations into the training process to improve the physical consistency of the predicted states~\cite{hu2021physics,eeckhout2024improved}. Learning-based methods have also been suggested specifically for contingency-constrained problems. Reference~\cite{du2019achieving} accelerates $N-1$ contingency screening using a deep convolutional neural network to find post-contingency AC-PF solutions. However, it requires training on numerous scenarios and topologies.

To reduce dependence on topology-specific training, researchers have explored topology-aware learning architectures. Graph neural networks (GNNs) address this issue by incorporating the power system topology directly into the learning architecture and have been applied to contingency prediction \cite{nakiganda2025graph}. More recently, \cite{wu2026universal} developed a universal GNN for power system state estimation across different topologies. Nonetheless, training across multiple scenarios and topologies remains a major computational burden. Reference~\cite{christianson2025fast} uses an input-convex NN to classify whether a power system is $N-k$ secure under the DC-PF model, yet that is a relatively lighter task than predicting the post-contingency states. 

Can a model trained solely to solve the AC-PF problem under the basecase topology be directly reused to predict post-contingency states? This work answers this question affirmatively and makes four contributions:
\begin{itemize}
\item[\emph{c1)}] We propose predicting post-contingency AC-PF states via a fixed-point iteration that reuses a single ML model trained exclusively on the basecase topology (Section~\ref{sec:problem}). 

\item[\emph{c2)}] To study the convergence of the proposed scheme, we analyze the fixed-point iterations derived from the DC-PF model and establish convergence for all non-critical single-line outages (Section~\ref{sec:convergence}). As a byproduct of broader interest, we establish bounds on PTDFs. 

\item[\emph{c3)}] We characterize the expanded set of AC-PF specifications needed to train the base-case ML model so that the fixed-point iteration operates properly (Section~\ref{sec:region}).

\item[\emph{c4)}] We analyze how ML prediction errors propagate through the fixed-point iteration to the final post-contingency state estimate (Section~\ref{sec:accuracy}).
\end{itemize}
We numerically tested the proposed method on the synthetic Texas 6,717-bus system using a DNN trained solely using basecase data. The proposed fixed-point iteration converges for all non-critical line contingencies. The empirical convergence rate and prediction accuracy corroborate our analytical findings. Overall, the method achieves a favorable speed–accuracy tradeoff, substantially reducing runtime relative to NR while maintaining higher accuracy over the DC-based solver. 

The questions addressed under \emph{c2)} and \emph{c3)} were also studied in the conference precursor of this work~\cite{HICSS2027}. In~\cite{HICSS2027}, we relied on semidefinite programming (SDP) formulations to bound the sensitivity of the basecase DNN with respect to its AC-PF specification inputs. Although this SDP-based analysis bounds the exact DNN fixed-point iteration, it is computationally demanding, limited to DNNs, and can yield loose sensitivity bounds that make the resulting guarantees overly conservative. In contrast, the DC-PF-based analysis developed here applies to arbitrary ML models, does not depend on the pretrained model, and provides insight into which contingencies may require more iterations or a larger training set. 

\section{Problem Statement and Formulation}\label{sec:problem}
This section defines the AC contingency analysis problem and proposes solving it with a single ML model trained on the basecase topology. A power system can be represented by a graph $\mcG_0 = (\mcN, \mcL_0)$. The set of nodes $\mcN$ collects the $N+1$ buses indexed by $n=0, 1, \dots, N$. The set of edges $\mcL_0:=\{(m,n):m,n\in\mcN\}$ collects $L$ transmission lines energized under the basecase system topology. The lines are indexed by $\ell=1,\ldots,L$. We partition $\mcN$ into three mutually exclusive sets: the singleton of the reference bus indexed by $n=0$; the set $\mcV$ of generator buses; and the set of load buses $\mcP$. Zero-injection buses belong to set $\mcP$. Buses hosting both generation and loads belong to set $\mcV$. For each bus $n\in\mcN$, let $v_n = V_n \angle \theta_n$ denote the voltage phasor, $i_n$ the injection current phasor, and $s_n = p_n + jq_n$  the complex power injection into bus $n$. Let vectors $\bv$, $\bi$, and $\bs$ stack the bus voltages, currents, and complex power injections for all $n\in\mcN$, respectively. 

Under the basecase topology, the AC-PF equations can be compactly expressed as
\begin{equation} \label{eq:pf0}
   \bs = \bv \odot \bbi = \bv \odot \left(\bbY_0\bbv\right), 
\end{equation}
where $\bY_0$ is the bus admittance matrix, $\odot$ denotes entrywise multiplication, and barred symbols denote complex conjugation. Matrix $\bY_0$ can be expressed as
\begin{equation}\label{eq:Y0}
\bY_0 = \sum_{\ell\in\mcL_0} \left(
y_\ell \ba_\ell\ba_\ell^\top + 
y_\ell^s\left(\be_m\be_m^\top+\be_n\be_n^\top\right) \right),
\end{equation}
where $y_\ell$ and $y_\ell^s$ are the series admittance and half shunt admittance of line $\ell$, respectively. If line $\ell=(m,n)$ runs between buses $m$ and $n$, we define vector
\begin{equation}\label{eq:al}
\ba_\ell:=\be_m-\be_n\quad \forall \ell\in\mcL_0,
\end{equation}
where $\be_m$ is the $m$-th column of the identity matrix $\bI_{N+1}$. 

The AC-PF problem is to find a power system state vector $\bv$ that satisfies \eqref{eq:pf0}, given two specifications per bus. The specification types vary with the bus type: \emph{i)} For the reference bus, we specify $V_0$ and set $\theta_0=0$; \emph{ii)} For PV buses, we specify $(p_n,V_n)$; and \emph{iii)} For PQ buses, we specify $(p_n,q_n)$. Given these $(2N+1)$ real-valued specifications, we can find an AC-PF solution by solving \eqref{eq:pf0} using the Newton-Raphson (NR) or other numerical methods. 

Assuming a particular AC-PF solver, define the mapping $G_0:\mathbb{R}^{2N+1}\rightarrow \mathbb{C}^{N+1}$ from the vector of AC-PF specifications $\bc\in\mathbb{R}^{2N+1}$ to the associated AC-PF solution
\begin{equation}\label{eq:pfmapping}
\bv_0 := G_0(\bc).
\end{equation}
The subscript $0$ emphasizes that this mapping refers to the basecase topology $\mcL_0$. 

In addition to basecase, a system operator is interested in finding the power system state under any single-line contingency. Let $\ell$ index the outage of line $\ell\in\mcL_0$. The system topology under contingency $\ell$ is defined over the edge set
\[\mcL_\ell:=\mcL_0\setminus \{\ell\}.\]
After contingency $\ell$, the power system graph becomes $\mcG_\ell = (\mcN, \mcL_\ell)$. If graph $\mcG_\ell$ is disconnected, we call line $\ell$ a \emph{critical line}. We henceforth assume that contingency analysis does not study the outage of critical lines. To simplify the exposition, let us assume that all lines are non-critical.

As with the basecase topology, the operator can use an AC-PF solver to find the system state $\bv_\ell$ under contingency $\ell$. Similar to $G_0$, the related mapping $G_\ell:\mathbb{R}^{2N+1}\rightarrow \mathbb{C}^{N+1}$ from AC-PF specifications to the post-contingency state is
\begin{equation}\label{eq:pfmappingell}
\bv_\ell = G_\ell(\bc).
\end{equation}
Clearly, running contingency analysis under the AC-PF model entails solving $(L+1)$ AC-PF problems. Under preventive contingency analysis, all contingencies share the same AC-PF specification vector $\bc$. Under corrective contingency analysis, the AC-PF specifications may vary per contingency. Our proposed methodology applies to both settings. 

To expedite AC contingency analysis, we first relate each $G_\ell$ to $G_0$. The bus admittance matrix $\bY_\ell$ associated with contingent topology $\mcL_\ell$ can be expressed as
\begin{equation}\label{eq:Y_ell}
\bY_\ell=\bY_0-y_\ell\ba_\ell\ba_\ell^\top -y_\ell^s\left(\be_m \be_m^\top+\be_n \be_n^\top\right).
\end{equation}
The contingency state satisfies the modified AC-PF equations
\begin{equation}\label{eq:pfell0}
    \bs = \bv_{\ell} \odot \left(\bbY_\ell\bbv_\ell\right).
\end{equation}
Substituting \eqref{eq:Y_ell} into \eqref{eq:pfell0} and rearranging yields
\begin{equation}\label{eq:pfell}
\bs +\check{\bg}_\ell(\bv_\ell)= \bv_{\ell} \odot \left(\bbY_0 \bbv_{\ell} \right),
\end{equation}
where the mapping $\check{\bg}_\ell(\bv_\ell):\mathbb{C}^{N+1}\rightarrow\mathbb{C}^{N+1}$ is defined as
\begin{equation}\label{eq:gell}
\check{\bg}_\ell(\bv_\ell):=\bv_\ell\odot\color{black}\left(\bar{y}_\ell\ba_\ell\ba_\ell^\top\bbv_\ell 
+ \bar{y}_\ell^s\left(\be_m \be_m^\top+\be_n \be_n^\top\right)\bbv_\ell\right).
\end{equation}
Vector $\check{\bg}_\ell(\bv_\ell)$ has only two non-zero entries, corresponding to the terminal buses $(m,n)$ of the outaged line $\ell$. 

Interestingly, the contingent PF equations in \eqref{eq:pfell} can be interpreted as the basecase PF equations in \eqref{eq:pf0} if complex power injections are modified from $\bs$ to $\bs +\check{\bg}_\ell(\bv_\ell)$. A similar concept, oftentimes termed the \emph{compensation trick}, appears in contingency analysis under the linearized DC-PF model~\cite[Sec.\,7.4]{wood2013power}. Under the DC-PF model, the compensation trick applies to a system of linear equations. In contrast, the compensation trick here applies to a system of quadratic equations.

Thanks to this interpretation, the system state under contingency $\ell$ satisfies
\begin{equation}\label{eq:equilibrium}
\bv_{\ell} = G_0\left(\bc + \bg_{\ell}(\bv_{\ell})\right),
\end{equation}
where the mapping $\bg_{\ell}(\bv_{\ell}):\mathbb{C}^{N+1}\rightarrow \mathbb{R}^{2N+1}$ is defined based on $\check{\bg}_\ell(\bv_\ell)$ as follows: \emph{i)} For the entries of $\bc$ corresponding to active power injections into PQ/PV buses, the mapping $\bg_{\ell}(\bv_{\ell})$ extracts the real part of $\check{\bg}_{\ell}(\bv_{\ell})$; \emph{ii)} For the entries of $\bc$ corresponding to reactive power injections into PV buses, the mapping $\bg_{\ell}(\bv_{\ell})$ extracts the imaginary part of $\check{\bg}_{\ell}(\bv_{\ell})$; and \emph{iii)} For the entries of $\bc$ corresponding to voltage magnitudes at the reference and the PV buses, the mapping $\bg_{\ell}(\bv_{\ell})$ takes zero values. Therefore, depending on whether the terminal buses of line $\ell$ are PQ, PV, or the reference bus, the vector $\bg_\ell(\bv_\ell)$ has 2--4 nonzero entries. 

Equation~\eqref{eq:equilibrium} establishes that any post-contingency state can be defined with the help of the basecase AC-PF mapping. Unfortunately, finding $\bv_\ell$ from \eqref{eq:equilibrium} is not straightforward, as the state appears on either side of this equation. Nonetheless, one may attempt finding $\bv_\ell$ using the fixed-point iteration
\begin{equation}\label{eq:pf-fixed-point}
\bv_{\ell}^{t+1} = G_0\left(\bc + \bg_{\ell}(\bv_{\ell}^t)\right).
\end{equation}
Each iteration $t$ involves two steps: In the first step, we modify 2--4 entries of $\bc$ by evaluating $\bg_\ell$ at the previous state estimate $\bv_\ell^t$. In the second step, we feed the modified specifications to the basecase AC-PF solver and receive the updated state $\bv_\ell^{t+1}$. If this iteration converges, its limit point satisfies~\eqref{eq:equilibrium}. However, finding $\bv_\ell$ using \eqref{eq:pf-fixed-point} rather than \eqref{eq:pfmappingell} bears no practical advantage. The fixed-point iteration invokes the AC-PF solver $G_0$ multiple times, which is clearly computationally more expensive than invoking $G_\ell$ just once. The significance of \eqref{eq:pf-fixed-point} lies in the insight it provides for translating this fixed-point iteration idea to an ML setting, as described next.

To accelerate AC-PF studies, a system operator may use an ML model (such as a DNN, GNN, or Gaussian process) to learn the basecase AC-PF mapping. Suppose we have already trained an ML model to predict AC-PF solutions given a specification vector $\bc$ as input. If vector $\bw$ collects the ML tunable parameters, we denote the ML mapping as
\begin{equation}
\bv = F_\bw(\bc).
\end{equation}

How about the contingency topologies? Given the large number of transmission lines, training a different ML model for each contingent topology is impractical. The fixed-point iteration in \eqref{eq:pf-fixed-point} suggests a possibly viable alternative. It suffices to train a \emph{single} ML model under the basecase topology to predict the state under any contingency $\ell$ by iteratively invoking the basecase ML model as
\begin{equation}\label{eq:dnn-fixed-point}
\bv_{\ell}^{t+1} = F_\bw\left(\bc + \bg_{\ell}(\bv_{\ell}^t)\right).
\end{equation}
The proposed approach incorporates the basecase ML model into a fixed-point iteration, as shown in Fig.~\ref{fig:loop}. 

\begin{figure}[t]
\centering
\includegraphics[width=1\linewidth]{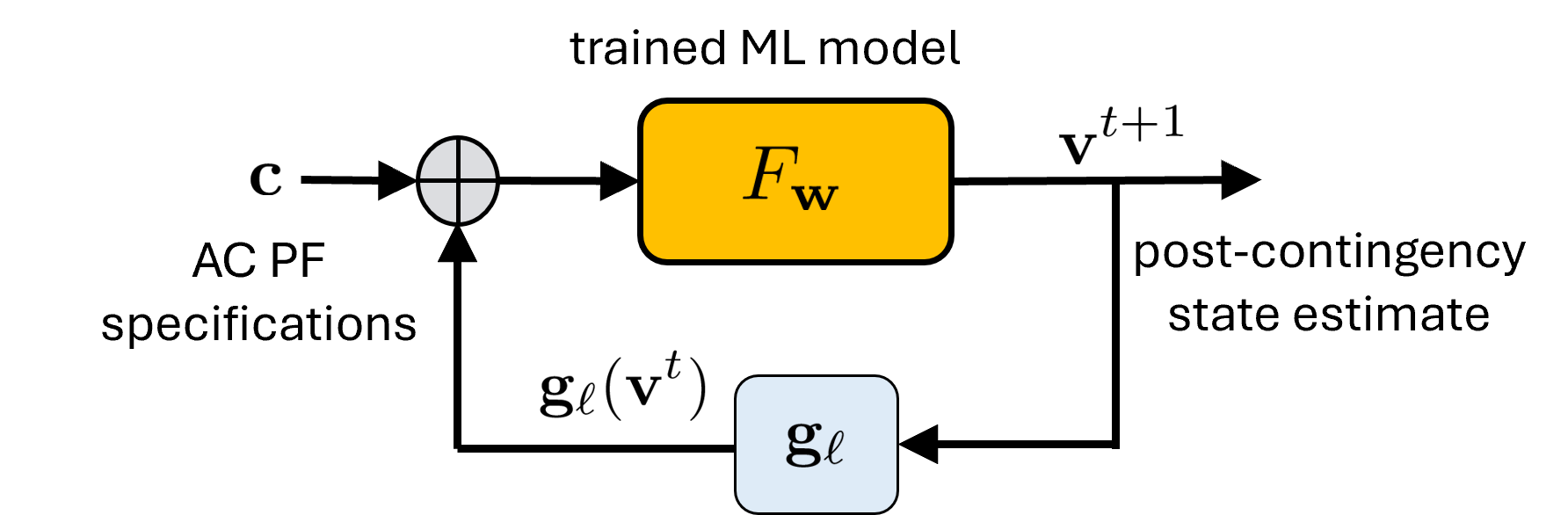}
\caption{Fixed-point formulation for predicting a post-contingency state using an ML model trained solely on the basecase topology. The feedback loop depends on the contingency. For a single-line contingency, the mapping $\mathbf{g}_{\ell}(\mathbf{v}^{t})$ is nonzero only at the terminal buses of the outaged line.}
\label{fig:loop}
\vspace*{-1em}
\end{figure}

During training, the proposed approach is meaningful because it avoids training $L$ additional ML models. During inference, the proposed approach can be computationally attractive if the fixed-point iterations converge quickly and the ML inference step is faster than an AC-PF solver. Before evaluating the proposed methodology numerically in Sec.~\ref{sec:tests}, we address the following three questions:
\begin{enumerate}
    \item[\emph{q1)}] Does the fixed-point iteration in \eqref{eq:dnn-fixed-point} converge for all contingencies? If so, what is its convergence rate? This question is addressed in Sec.~\ref{sec:convergence}.
    \item[\emph{q2)}] How can we ensure that the ML model has been trained to predict states accurately when given $\bc + \bg_{\ell}(\bv_{\ell}^t)$ as its input for all contingencies $\ell$, specification vectors $\bc$, and iterations $t$? This question is addressed in Sec.~\ref{sec:region}.
    \item[\emph{q3)}] How does the prediction error of the ML model propagate throughout iterations into the final post-contingency state estimate? This question is addressed in Sec.~\ref{sec:accuracy}.
\end{enumerate}

\section{Convergence Rate Analysis}\label{sec:convergence}
Establishing the convergence of \eqref{eq:dnn-fixed-point} is challenging because this fixed-point iteration involves an ML model coupled with the nonlinear mapping $\bg_\ell(\bv)$. A natural approach to ensure \eqref{eq:dnn-fixed-point} reaches an equilibrium is to show that $F_\bw(\bc+\bg_\ell(\bv))$ is a contraction mapping with respect to $\bv$, upon invoking Banach's fixed-point theorem~\cite{banachcontraction}. To do so, we need to characterize the smoothness of the ML model mapping through its Lipschitz constant. In~\cite{HICSS2027}, we bounded this constant by solving an SDP. Nevertheless, this SDP formulation applies only to DNNs, may be computationally expensive, and may yield overly conservative upper bounds. 

Alternatively, if the ML model has learned the AC-PF mapping accurately enough, one may try establishing the convergence of \eqref{eq:pf-fixed-point} rather than \eqref{eq:dnn-fixed-point}. 
However, proving $G_0\left(\bc + \bg_{\ell}(\bv_{\ell})\right)$ is contracting is also technically challenging. Considering the AC-PF solver alone, existing convergence results rely on contractions, energy functions, or monotone operators, but apply only to lossless networks, radial networks, or limited loading conditions~\cite{simpson2017theory,bazrafshan2018convergence, bolognani2016}.

Given this predicament, we approximate the AC-PF mapping in \eqref{eq:pf-fixed-point} by the linearized mappings corresponding to the DC-PF model. For the basecase, the DC-PF mappings are
\begin{subequations}\label{eq:dcpf}
\begin{align}
\bp&=\bB_0\btheta,\label{eq:ptheta}\\
\bq&=\bB_0\bu,\label{eq:qv}
\end{align}
\end{subequations}
where vectors $(\bp,\bq)$ collect the active and reactive power injections at all buses, vector $\btheta$ collects the bus voltage angles, vector $\bu$ collects the bus voltage magnitude deviations from one per unit, and the bus susceptance matrix is defined as
\begin{equation}\label{eq:B0-dc}
\bB_0:=\sum_{k\in\mcL_0}\frac{1}{x_k}\ba_k\ba_k^\top,
\end{equation}
where $x_k>0$ is the reactance of line $k$. To simplify the exposition and without loss of generality, we use the same matrix $\bB_0$ for the two components of the DC-PF model. 

Despite its approximate nature, the convergence analysis relying on the DC-PF model reveals the key factors governing the convergence of \eqref{eq:pf-fixed-point} and \eqref{eq:dnn-fixed-point}. Recall that \eqref{eq:pf-fixed-point} maps AC-PF specifications to complex voltages. Under the DC-PF approximation, the exact AC-PF mapping $G_0$ decouples into the $P{-}\theta$ and the $Q{-}V$ mappings in \eqref{eq:dcpf}. We next study the convergence of these two mappings. 

\vspace*{-.5em}
\subsection{Convergence of the $P{-}\theta$ Mapping}
\label{subsec:ptheta}
If we drop the first row and column from $\bB_0$, we obtain the reduced bus susceptance matrix defined as
\begin{equation}
\tbB_0 := \sum_{k\in\mcL_0} \frac{1}{x_k}\tba_k\tba_k^\top,
\end{equation}
where vector $\tba_k$ is obtained from $\ba_k$ upon dropping its first entry corresponding to the reference bus. Similar to the AC case, the matrix appearing in the DC-PF model under contingency $\ell$ is a rank-one modification of $\tbB_0$. Hence, the reduced $P$--$\theta$ model under contingency $\ell$ reads as
\begin{equation}\label{eq:DC_PF_Contingency}
\tbp=\left(\tbB_0-\frac{1}{x_\ell}\tba_\ell\tba_\ell^\top
\right)\tbtheta_\ell,
\end{equation}
where $\tbp$ and $\tbtheta$ are $N$-length vectors excluding  the reference bus. Rearranging terms yields
\begin{equation}\label{eq:dcpf-equilibrium}
\tbtheta_\ell=\tbG_\ell\tbtheta_\ell+\tbB_0^{-1}\tbp,
\end{equation}
where $\tbG_\ell:=\frac{1}{x_\ell}\tbB_0^{-1}\tba_\ell\tba_\ell^\top$. Equation~\eqref{eq:dcpf-equilibrium} naturally gives rise to the fixed-point iteration
\begin{equation}\label{eq:dcpf-fixed-point}
\tbtheta_\ell^{t+1}=\tbG_\ell\tbtheta_\ell^t+\tbB_0^{-1}\tbp,
\end{equation}
which is a discrete-time affine system. Before establishing its stability, we review power transfer distribution factors (PTDFs) and derive a key property instrumental to subsequent analysis.

According to the DC-PF model, the vector $\bef\in\mathbb{R}^L$ of active power line flows depends linearly on active power injections
\begin{equation}\label{eq:f=Sp}
\bef=\bS\bp,
\end{equation}
where $\bS:=\bX^{-1}\tbA\tbB_0^{-1}$, the diagonal matrix $\bX$ carries the line reactances on its main diagonal, and $\tbA$ has vectors $\{\tba_k\}_{k=1}^L$ as its rows. The $(\ell,m)$-th entry of $\bS$, denoted by $S_{\ell,m}$, quantifies the sensitivity of line flow $\ell$ to bus injection $m$. If one unit of power is transferred from bus $i$ to bus $j$, the resulting change in line flow $\ell$ is given by the \emph{power transfer distribution factor}
\begin{equation}\label{eq:ptdf}
\ptdf_{\ell,i,j}:= S_{\ell,i}-S_{\ell,j}.
\end{equation}
If bus $i$ or $j$ is the reference bus, set $S_{\ell,i}=0$ or $S_{\ell,j}=0$, accordingly. 
We next show a key property of PTDFs.

\begin{theorem}\label{th:ptdf}
For any non-critical line $\ell$, it holds that
\begin{equation*}
|\ptdf_{\ell,i,j}|<1,
\qquad\forall\,i,j\in\mcN.
\end{equation*}
\end{theorem}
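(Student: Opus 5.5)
The plan is to interpret $\ptdf_{\ell,i,j}$ as the current through edge $\ell$ of a resistive network when one unit of current is injected at $i$ and withdrawn at $j$. Each line $k$ has resistance $x_k$, and the potentials are the DC angles $\btheta$. Concretely, fix $i,j$ and set $\bp=\be_i-\be_j$, restricted to the non-reference buses; if $i$ or $j$ is the reference bus, that term is simply dropped. Let $\tbtheta=\tbB_0^{-1}\tbp$ and extend it by $\theta_0=0$. Then $\ptdf_{\ell,i,j}=f_\ell=(\theta_m-\theta_n)/x_\ell$ for $\ell=(m,n)$. If $i=j$ the flow is zero and the claim is trivial, so assume $i\neq j$. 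The task becomes showing that a unit $i$--$j$ flow sends strictly less than one unit through any edge whose removal leaves the graph connected.

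\textbf{Step 1 (potential bounds, maximum principle).} At every bus $n\notin\{i,j\}$, the flows balance: $\sum_{k\ni n}(\theta_n-\theta_{k'})/x_k=0$. Hence $\theta_n$ is a weighted average of its neighbors' angles. By the discrete maximum principle on the connected graph $\mcG_0$, every $\theta_n$ lies in $[\theta_j,\theta_i]$, and $\theta_i>\theta_j$ because the net outflow at $i$ is $+1$. In particular, every edge flow points from higher to lower potential, and the flow has no circulation.

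\textbf{Step 2 (cut argument).} Orient $\ell=(m,n)$ so that $\theta_m\ge\theta_n$. If $\theta_m=\theta_n$, then $f_\ell=0<1$. Otherwise let $\tau$ be any threshold with $\theta_n<\tau<\theta_m$, and let $U:=\{k:\theta_k\ge\tau\}$. Then $i\in U$, $j\notin U$, and every edge crossing the cut $(U,U^c)$ carries nonnegative flow from $U$ to $U^c$. Flow conservation gives that the total flow across the cut equals $1$, so $f_\ell\le 1$. Equality would require $\ell$ to be the only edge crossing the cut with positive flow.

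\textbf{Step 3 (strictness via non-criticality).} This is the main obstacle, because other edges may cross the cut while carrying zero flow, namely edges whose endpoints have equal angles. Choose $\tau$ generically in the open interval $(\theta_n,\theta_m)$ so that no bus angle equals $\tau$. Then any edge crossing the cut joins a bus with $\theta\ge\tau$ to one with $\theta<\tau$, so its endpoint angles differ strictly. Its flow $(\theta_a-\theta_b)/x_k$ is therefore strictly positive. Since $\ell$ is non-critical, $\mcG_\ell$ is connected, so some edge $k\neq\ell$ also crosses the cut $(U,U^c)$. That edge carries positive flow, and hence $f_\ell<1$. Applying the same argument with $i$ and $j$ swapped, which amounts to reversing the sign, gives $f_\ell>-1$. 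Together these yield $|\ptdf_{\ell,i,j}|<1$ for all $i,j$, including the cases where one of them is the reference bus.
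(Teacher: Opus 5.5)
Your proof is correct and takes the same route as the paper's: place $\theta_i$ and $\theta_j$ at the extremes of the angle profile, cut the buses at a threshold strictly between $\theta_n$ and $\theta_m$, use flow conservation to show the cut flows are positive and sum to one, and use the non-criticality of $\ell$ to find a second edge crossing the cut. Your non-strict bound $\theta_j\le\theta_k\le\theta_i$ is more accurate than the paper's strict claim (a bus hanging off $i$ alone has $\theta_k=\theta_i$), and the generic choice of $\tau$ in Step 3 is unnecessary, since any edge crossing from $U$ to $U^c$ already has $\theta_a\ge\tau>\theta_b$.
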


Although this property is widely used, we could not find a formal proof in the literature. For completeness, a proof of Theorem~\ref{th:ptdf} is provided in the appendix. We next use Th.~\ref{th:ptdf} to establish the convergence of \eqref{eq:dcpf-fixed-point}. While Th.~\ref{th:ptdf} applies to any PTDF triplet $(\ell,i,j)$, the next result requires the property only for the triplet $(\ell,m,n)$, where $m$ and $n$ are the terminal buses of line $\ell$. We will refer to $\ptdf_{\ell,m,n}$ as the \emph{self-PTDF} of line $\ell=(m,n)$, and denote it with the more compact notation
\begin{equation}\label{eq:selfptdf}
\beta_\ell:=\ptdf_{\ell,m,n}.
\end{equation}

\begin{theorem}\label{th:pthetacontraction}
The $P{-}\theta$ fixed-point iteration in \eqref{eq:dcpf-fixed-point} converges to the solution of \eqref{eq:dcpf-equilibrium} for any non-critical line $\ell=(m,n)\in\mcL$. More specifically, for every $t\geq 1$, it holds that
\begin{equation}\label{eq:rate}
\|\btheta_\ell^t-\btheta_\ell\|_2
\le
\beta_\ell^{t-1}
\cdot
\|\tbG_\ell(\btheta_\ell^0-\btheta_\ell)\|_2,
\end{equation}
provided that the initial point $\btheta_\ell^0$ satisfies
$\tba_\ell^\top(\btheta_\ell^0-\btheta_\ell)\neq 0$. The self-PTDF satisfies $\beta_\ell\in(0,1)$.
\end{theorem}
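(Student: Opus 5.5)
The plan is to exploit that $\tbG_\ell=\frac{1}{x_\ell}\tbB_0^{-1}\tba_\ell\tba_\ell^\top$ has rank one. Subtracting \eqref{eq:dcpf-equilibrium} from \eqref{eq:dcpf-fixed-point} gives the homogeneous error recursion $\be^{t+1}=\tbG_\ell\be^t$ with $\be^t:=\tbtheta_\ell^t-\tbtheta_\ell$, hence $\be^t=\tbG_\ell^t\be^0$. For a rank-one matrix $\bu\bv^\top$ one has $(\bu\bv^\top)^t=(\bv^\top\bu)^{t-1}\bu\bv^\top$. With $\bu=\tbB_0^{-1}\tba_\ell$ and $\bv=\tba_\ell/x_\ell$, this yields
\begin{equation*}
\tbG_\ell^t=\gamma_\ell^{\,t-1}\tbG_\ell,\qquad \gamma_\ell:=\frac{1}{x_\ell}\tba_\ell^\top\tbB_0^{-1}\tba_\ell .
\end{equation*}
Therefore $\|\be^t\|_2=|\gamma_\ell|^{t-1}\|\tbG_\ell\be^0\|_2$ holds with equality for every $t\ge1$. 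The hypothesis $\tba_\ell^\top\be^0\neq0$ only ensures that $\tbG_\ell\be^0\neq\bzero$, so the bound is not vacuous; if it fails, the iteration converges after a single step.

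The second step identifies $\gamma_\ell$ with the self-PTDF $\beta_\ell$. From $\bS=\bX^{-1}\tbA\tbB_0^{-1}$, the $\ell$-th row of $\bS$ equals $\frac{1}{x_\ell}\tba_\ell^\top\tbB_0^{-1}$, with reference-bus columns set to zero. Then
\begin{equation*}
\beta_\ell=S_{\ell,m}-S_{\ell,n}=\frac{1}{x_\ell}\tba_\ell^\top\tbB_0^{-1}(\tbe_m-\tbe_n)=\frac{1}{x_\ell}\tba_\ell^\top\tbB_0^{-1}\tba_\ell=\gamma_\ell ,
\end{equation*}
where $\tbe_m$ is the reduced unit vector, taken as zero if $m=0$. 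I will check the reference-bus convention carefully so that the identity covers lines incident to bus $0$.

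The third step bounds $\beta_\ell$. Positivity follows because the graph is connected, so $\tbB_0\succ0$ and hence $\tbB_0^{-1}\succ0$; also $\tba_\ell\neq\bzero$, since the line has at least one non-reference terminal. This gives $\beta_\ell>0$. Theorem~\ref{th:ptdf} applied to the triplet $(\ell,m,n)$ gives $|\beta_\ell|<1$, so $\beta_\ell\in(0,1)$. Consequently $\beta_\ell^{t-1}\to0$ and the iteration converges to the unique solution of \eqref{eq:dcpf-equilibrium}.

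Uniqueness holds because $\tbB_0-\frac{1}{x_\ell}\tba_\ell\tba_\ell^\top$ is the reduced Laplacian of the connected graph $\mcG_\ell$, hence invertible; equivalently, $1-\beta_\ell\neq0$ by the matrix determinant lemma. I expect the argument to be essentially mechanical once Theorem~\ref{th:ptdf} is available. The only delicate point is the bookkeeping that identifies the rank-one eigenvalue with $\beta_\ell$ when one terminal is the reference bus, together with the fact that strictness $\beta_\ell<1$ relies on non-criticality through Theorem~\ref{th:ptdf}.
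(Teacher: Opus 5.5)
Your proposal is correct and follows the paper's proof essentially step for step. Both arguments use the error recursion with the rank-one identity $\tbG_\ell^t=\beta_\ell^{t-1}\tbG_\ell$, identify the trace $\frac{1}{x_\ell}\tba_\ell^\top\tbB_0^{-1}\tba_\ell$ with $S_{\ell,m}-S_{\ell,n}=\beta_\ell$, get positivity from $\tbB_0\succ\bzero$, and get $\beta_\ell<1$ from Theorem~\ref{th:ptdf}. Your extra remarks settle points the paper leaves implicit: that $\tba_\ell\neq\bzero$ for lines at the reference bus, that the initial-point condition only excludes one-step convergence, and uniqueness of the post-contingency solution. Your indexing $\be^t=\tbG_\ell^t\be^0$ also avoids the paper's off-by-one in the unrolled recursion.
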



Theorem~\ref{th:pthetacontraction} establishes that the fixed-point iteration in \eqref{eq:dcpf-fixed-point} converges geometrically for all non-critical lines. It also shows that contingencies on lines with self-PTDF values near one may require more iterations to converge. The condition on the initial point is mild and is satisfied, for example, by $\btheta_\ell^0=\bzero$. To complete the approximate convergence analysis and address question \emph{q1)}, we establish analogous convergence results for the $Q{-}V$ DC-PF fixed-point iteration. Interested readers can find this analysis in the next subsection; otherwise, we proceed to question \emph{q2)} in Sec.~\ref{sec:region}.

\vspace*{-.5em}
\subsection{Convergence of the $Q{-}V$ Mapping}\label{subsec:qv}


Despite the resemblance between \eqref{eq:ptheta} and \eqref{eq:qv}, studying the convergence of the $Q{-}V$ DC-PF fixed-point iteration is slightly more involved. Partition \eqref{eq:qv} into PQ/PV buses as
\begin{equation}\label{eq:qvpart}
\begin{bmatrix}
\bq_\mcP\\
\bq_\mcV
\end{bmatrix}
=
\begin{bmatrix}
\bB_{\mcP\mcP} & \bB_{\mcP\mcV}\\
\bB_{\mcV\mcP} & \bB_{\mcV\mcV}
\end{bmatrix}
\begin{bmatrix}
\bu_\mcP\\
\bu_\mcV
\end{bmatrix}.
\end{equation}
For this analysis only, we treat the reference bus as a PV bus. This is because its voltage magnitude is specified and active power injections are immaterial in the $Q{-}V$ model.

Different from the $P{-}\theta$ model, here we are given $(\bq_\mcP,\bu_\mcV)$ and want to find $(\bq_\mcV,\bu_\mcP)$. In fact, the subvector $\bu_\mcP$ is the key unknown in \eqref{eq:qvpart}. Once we know $\bu_\mcP$, we can find $\bq_\mcV$ as
$\bq_\mcV=\bB_{\mcV\mcP}\bu_\mcP + \bB_{\mcV\mcV}\bu_\mcV$. 
Hence, we focus on the top row partition of \eqref{eq:qvpart} and rearrange it as
\begin{equation}\label{eq:qvpart2}
\bB_{\mcP\mcP}\bu_\mcP = \bq_\mcP - \bB_{\mcP\mcV}\bu_\mcV.
\end{equation}

Equation~\eqref{eq:qvpart2} applies for the basecase. Under contingency $\ell$, the non-reduced bus susceptance matrix is modified as 
\begin{equation}
\bB_\ell = \bB_0-\frac{1}{x_\ell} \ba_\ell\ba_\ell^\top.
\label{eq:basecase_qv}
\end{equation}
Let us partition $\ba_\ell$ conformably with \eqref{eq:qvpart} as
\[\ba_\ell = \begin{bmatrix}
\ba_{\ell_\mcP}\\
\ba_{\ell_\mcV}
\end{bmatrix}.\]
Per \eqref{eq:qvpart2}, the $Q{-}V$ model under contingency $\ell$ reads as
\begin{equation}\label{eq:dcqv-equilibrium}
\left(\bB_{\mcP\mcP}-\frac{1}{x_\ell} \ba_{\ell_\mcP}\ba_{\ell_\mcP}^\top\right)\bu_\mcP = \underbrace{\bq_\mcP - \left(\bB_{\mcP\mcV}-\frac{1}{x_\ell} \ba_{\ell_\mcP}\ba_{\ell_\mcV}^\top\right)\bu_\mcV}_{:=\bbq_{\mcP}^\ell}.
\end{equation}
The vector on the right-hand side can be computed beforehand because both $\bq_\mcP$ and $\bu_\mcV$ are known. We can now convert \eqref{eq:dcqv-equilibrium} into a fixed-point iteration to find the unknown voltage magnitudes at PQ buses under contingency $\ell$ as
\begin{equation}\label{eq:dcqv-fixed-point}
\bu_\mcP^{t+1}=\bG_\ell\bu_\mcP^{t} + \bB_{\mcP\mcP}^{-1}\bbq_{\mcP}^\ell
\end{equation}
where $\bG_\ell:=\frac{1}{x_\ell}\bB_{\mcP\mcP}^{-1}\ba_{\ell_\mcP}\ba_{\ell_\mcP}^\top$. The matrix $\bB_{\mcP\mcP}$ is invertible because it is a principal submatrix of $\tbB_0$.
Matrix $\bG_\ell$ has rank at most one. Its strictly positive eigenvalue equals to its trace
\begin{equation}\label{eq:lambda-qv}
\lambda_\ell=
\frac{1}{x_\ell}\ba_{\ell_{\mcP}}^\top\bB_{\mcP\mcP}^{-1}
\ba_{\ell_{\mcP}}.
\end{equation}
We next establish that \eqref{eq:dcqv-fixed-point} converges for all non-critical lines.

\begin{theorem}\label{th:qvcontraction}
The $Q{-}V$ fixed-point iteration in \eqref{eq:dcqv-fixed-point} converges to the solution of \eqref{eq:dcqv-equilibrium} for any non-critical line $\ell=(m,n)\in\mcL_0$. In particular, the eigenvalue of matrix $\bG_\ell$ in \eqref{eq:dcqv-fixed-point} satisfies $0\leq\lambda_\ell<1$.
Consequently, for every $t\geq1$,
\begin{equation}\label{eq:qv-rate}
\|\bu_{\ell,\mcP}^t-\bu_{\ell,\mcP}\|_2\leq\lambda_\ell^{t-1}
\|\bG_\ell(\bu_{\ell,\mcP}^0-\bu_{\ell,\mcP})\|_2.
\end{equation}
\end{theorem}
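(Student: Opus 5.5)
We would mirror the argument behind Theorem~\ref{th:pthetacontraction}, with $\bB_{\mcP\mcP}$ playing the role of $\tbB_0$. First observe that $\bB_{\mcP\mcP}$ is positive definite, being a principal submatrix of the positive definite $\tbB_0$. Hence $\lambda_\ell=\frac{1}{x_\ell}\ba_{\ell_\mcP}^\top\bB_{\mcP\mcP}^{-1}\ba_{\ell_\mcP}\ge 0$, with equality exactly when $\ba_{\ell_\mcP}=\bzero$, i.e., when both terminals of $\ell$ are PV (or reference) buses. In that case $\bG_\ell=\bzero$ and the iteration reaches the solution in one step, so \eqref{eq:qv-rate} holds trivially.

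The main step is the strict bound $\lambda_\ell<1$. We would interpret $\bB_{\mcP\mcP}$ as the reduced susceptance matrix of a modified network in which all PV buses and the reference bus are merged (short-circuited) into a single grounded reference node. Its Laplacian, after grounding the merged node, is exactly $\bB_{\mcP\mcP}$; parallel lines that arise from the merging are kept as separate edges. Under this interpretation, $\ba_{\ell_\mcP}$ is the incidence vector of line $\ell$ in the merged network, with terminals mapped to the merged node when they are PV buses. Therefore $\lambda_\ell$ is the self-PTDF of line $\ell$ in the merged network, obtained via the same identity $\beta_\ell=\frac{1}{x_\ell}\tba_\ell^\top\tbB_0^{-1}\tba_\ell$ that underlies Theorem~\ref{th:pthetacontraction}.

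It remains to check that $\ell$ is non-critical in the merged network, so that Theorem~\ref{th:ptdf} applies. Merging vertices cannot disconnect a graph, and removing $\ell$ from the original graph leaves it connected. The contracted graph minus $\ell$ is the contraction of the original graph minus $\ell$, so it stays connected. One subtle point is that $\ell$ could become a self-loop, but that is precisely the $\ba_{\ell_\mcP}=\bzero$ case handled above. Theorem~\ref{th:ptdf} then gives $\lambda_\ell=|\lambda_\ell|<1$. As a fallback, if the contraction argument feels loose, we would prove $\lambda_\ell<1$ directly. Since $\bB_{\mcP\mcP}-\frac{1}{x_\ell}\ba_{\ell_\mcP}\ba_{\ell_\mcP}^\top$ is the corresponding principal submatrix of the positive definite reduced post-contingency matrix (positive definite because $\mcG_\ell$ is connected), the matrix determinant lemma gives $\det(\cdot)=(1-\lambda_\ell)\det\bB_{\mcP\mcP}>0$.

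Finally, the rate follows from the error recursion. Subtracting \eqref{eq:dcqv-equilibrium}, rewritten as $\bu_{\ell,\mcP}=\bG_\ell\bu_{\ell,\mcP}+\bB_{\mcP\mcP}^{-1}\bbq_\mcP^\ell$, from \eqref{eq:dcqv-fixed-point} yields $\be^{t}=\bG_\ell^t\be^0$ with $\be^t:=\bu_{\ell,\mcP}^t-\bu_{\ell,\mcP}$. Because $\bG_\ell$ is rank one with $\bG_\ell^2=\lambda_\ell\bG_\ell$, we get $\bG_\ell^t=\lambda_\ell^{t-1}\bG_\ell$, so $\|\be^t\|_2=\lambda_\ell^{t-1}\|\bG_\ell\be^0\|_2$ and convergence follows. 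We expect the main obstacle to be making the network-reduction step rigorous, including multi-edges and the self-loop case; the determinant route is the safer way to close it.
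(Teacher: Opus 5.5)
Your proof is correct, but it is organized differently from the paper's.

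\textbf{The paper's route.} It splits by terminal-bus type.
\begin{itemize}
\item For PV--PV lines, $\bG_\ell=\bzero$.
\item For PQ--PQ lines, it uses the Schur complement of $\tbB_0$ to show $\bB_{\mcP\mcP}^{-1}\preceq\bC_{\mcP\mcP}$, whence $\lambda_\ell\le\beta_\ell<1$ by Theorem~\ref{th:ptdf}.
\item For PQ--PV lines, it writes $\bB_{\mcP\mcP}$ as $\frac{1}{x_\ell}\be_m\be_m^\top$ plus the post-contingency PQ block. That block is positive definite by non-criticality, and the variational formula for $\be_m^\top\bB_{\mcP\mcP}^{-1}\be_m$ finishes the case.
\end{itemize}

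\textbf{Your determinant-lemma fallback.} It rests on exactly the positive-definiteness fact from the paper's PQ--PV case, but applies it to every line at once. No case split is needed, and it is the most elementary route.

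\textbf{Your main route.} You short all PV buses and the reference into one grounded node, so that $\lambda_\ell$ becomes the self-PTDF of $\ell$ in the contracted network. This is also sound: the cut argument behind Theorem~\ref{th:ptdf} goes through verbatim with parallel lines, self-loops do not enter the Laplacian, and contracting the connected graph $\mcG_\ell$ keeps it connected.

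\textbf{What each buys.} The paper's Schur-complement step also gives $\lambda_\ell\le\beta_\ell$, linking the $Q{-}V$ and $P{-}\theta$ rates; your determinant argument does not. Your contraction viewpoint, however, yields that comparison for every line, including mixed PQ--PV lines where the paper only shows $\lambda_\ell<1$. The reason is that $x_\ell\beta_\ell$ and $x_\ell\lambda_\ell$ are the effective reactances between the terminals of $\ell$ before and after shorting. Shorting nodes can only decrease effective reactance (Rayleigh monotonicity).
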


\begin{figure}[t]
\centering
\includegraphics[width=.9\linewidth]{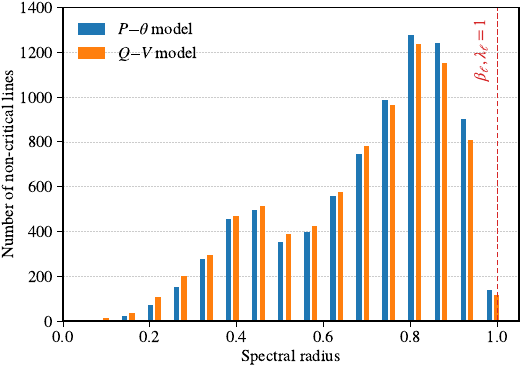}
\vspace*{-1em}
\caption{Distribution of contraction constants involved in the $P{-}\theta$ and the $Q{-}V$ models across all 8,079 non-critical lines of the Texas7k system.}
\label{fig:spectral_radius}
\end{figure}

We finish this section by illustrating the distribution of the contraction constants $\beta_\ell$ involved in the $P{-}\theta$ model (Th.~\ref{th:pthetacontraction}) and $\lambda_\ell$ in the $Q{-}V$ model (Th.~\ref{th:qvcontraction}) for the Texas7k benchmark system. This is a synthesized 6,717-bus system with 8,079 non-critical lines out of 9,140 transmission lines. Figure~\ref{fig:spectral_radius} shows the distributions of these spectral radii across the non-critical lines, confirming they are all less than one.


\color{black}
\section{Admissible Region for ML Model Inputs}\label{sec:region}
Suppose an operator wants to study a power system for all specification vectors $\bc$ in a given set $\mcC\subset \mathbb{R}^{2N+1}$. If the operator is interested only in the basecase topology, they can train an ML model to predict basecase AC-PF solutions for sufficiently many specification scenarios drawn from $\mcC$. In this case, the \emph{training region} of the ML model coincides with the \emph{specification region} of the AC-PF study. However, if the operator intends to use the same model as in \eqref{eq:dnn-fixed-point} to conduct contingency analysis over the specification region $\mcC$, the model must be trained over an expanded training region $\mcC'$, with $\mcC\in\mcC'$. This section explains how to find $\mcC'$ given $\mcC$.

\emph{Why should the ML model be trained on an expanded training set?} For the fixed-point iteration of \eqref{eq:dnn-fixed-point} to succeed, the ML model should accurately predict the AC-PF solutions for all modified specification vectors
\begin{equation}\label{eq:modifiedspec}
\bc_\ell^t:=\bc+\bg_\ell(\bv_\ell^t).
\end{equation}
This should hold for all $\bc\in\mcC$, all contingencies $\ell\in\mcL$, and across iterations indexed by $t$. Therefore, the training set of the ML model $\mcC'$ should include all possible values of $\bc_\ell^t$ from \eqref{eq:modifiedspec}. If we set $\bv_\ell^t=\bone$ (flat voltage profile) in \eqref{eq:modifiedspec}, we get $\bc_\ell^t=\bc$. This implies that if $\bc\in\mcC$, then $\bc\in\mcC'$, and thus, $\mcC\subset\mcC'$. The cardinal question is how to find $\mcC'$.

Determining $\mcC'$ to include all vectors $\bc_\ell^t$ is a formidable task because \eqref{eq:dnn-fixed-point} is nonlinear, iterative, and involves the ML model. To keep the analysis tractable, we again resort to the DC-PF models of Sec.~\ref{sec:convergence}. Moreover, we postulate that the specification set admits the Cartesian-product structure
\begin{equation}\label{eq:cartesian}
\mcC=\mcC_p\times\mcC_q\times\mcC_u,
\end{equation}
where $\mcC_p$, $\mcC_q$, and $\mcC_u$ are the sets of active/reactive power and voltage magnitude specifications, respectively. 

Due to the decoupled nature of the DC-PF models, the training set $\mcC'$ inherits the same Cartesian-product structure
\begin{equation}\label{eq:cartesian2}
\mcC'=\mcC_p'\times\mcC_q'\times\mcC_u'
.
\end{equation}
Because $\bg_\ell(\bv)$ modifies only power injections and no voltage magnitudes, we can safely set $\mcC_u'=\mcC_u$. In the subsequent analysis, we rely on the $P{-}\theta$ model to approximate $\mcC_p'$ from $\mcC_p$, and on the $Q{-}V$ model to approximate $\mcC_q'$ from $\mcC_q$. 

Using the $P{-}\theta$ model, we first derive a sufficient condition for $\mcC_p'$ to be a valid training set. 

\begin{lemma}\label{le:Cp'}
According to the $P{-}\theta$ model, if the training set $\mcC_p'$ includes the points 
\[\mu\tbp+(1-\mu)\left(\bI_N+\frac{1}{1-\beta_\ell}\tbG_\ell^\top\right)\tbp\]
for all $\mu\in[0,1]$, $\tbp\in\mcC_p$, and contingencies $\ell$, then $\mcC_p\subset \mcC_p'$.
\end{lemma}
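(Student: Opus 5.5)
The plan is to show that, under the $P{-}\theta$ model and starting from the flat profile $\tbtheta_\ell^0=\bzero$, every modified injection vector fed to the basecase model during \eqref{eq:dcpf-fixed-point} is exactly one of the listed points. The value $\mu=1$ then returns $\tbp$ itself, which gives $\mcC_p\subset\mcC_p'$. I read the lemma's conclusion as this stronger statement: the region covers $\mcC_p$ together with all inputs $\bc_\ell^t$ met along the iteration.

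First, I will write the DC analogue of \eqref{eq:modifiedspec} explicitly. Under the $P{-}\theta$ model, the compensating injection is $\frac{1}{x_\ell}\tba_\ell\tba_\ell^\top\tbtheta_\ell^t$. Hence iteration $t$ feeds
\[
\tbp^t:=\tbp+\tba_\ell s_t,\qquad s_t:=\frac{1}{x_\ell}\tba_\ell^\top\tbtheta_\ell^t,
\]
into the basecase map, and then sets $\tbtheta_\ell^{t+1}=\tbB_0^{-1}\tbp^t$. The only quantity that changes across iterations is the scalar $s_t$, which is the flow on line $\ell$ implied by the current angle estimate.

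Second, I will derive a scalar recursion for $s_t$. Multiplying $\tbtheta_\ell^{t+1}=\tbB_0^{-1}\tbp+\tbB_0^{-1}\tba_\ell s_t$ by $\frac{1}{x_\ell}\tba_\ell^\top$ gives
\[
s_{t+1}=f_0+\beta_\ell s_t,\qquad f_0:=\frac{1}{x_\ell}\tba_\ell^\top\tbB_0^{-1}\tbp,\qquad \beta_\ell=\frac{1}{x_\ell}\tba_\ell^\top\tbB_0^{-1}\tba_\ell .
\]
Here I must check that $\frac{1}{x_\ell}\tba_\ell^\top\tbB_0^{-1}\tba_\ell$ coincides with the self-PTDF in \eqref{eq:selfptdf}. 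This follows from $\bS=\bX^{-1}\tbA\tbB_0^{-1}$ and $\tba_\ell=\tilde{\be}_m-\tilde{\be}_n$, with the reference-bus convention handled by dropping the zeroth entry. With $s_0=0$, the recursion unrolls to
\[
s_t=f_0\,\frac{1-\beta_\ell^t}{1-\beta_\ell}.
\]
This is well defined because Theorem~\ref{th:pthetacontraction} (via Theorem~\ref{th:ptdf}) gives $\beta_\ell\in(0,1)$.

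Third, I will identify the points. By definition $\tbG_\ell^\top=\frac{1}{x_\ell}\tba_\ell\tba_\ell^\top\tbB_0^{-1}$, using the symmetry of $\tbB_0$, so $\tbG_\ell^\top\tbp=\tba_\ell f_0$. Setting $\mu_t:=\beta_\ell^t\in(0,1]$ then gives
\[
\tbp^t=\tbp+(1-\mu_t)\frac{1}{1-\beta_\ell}\tbG_\ell^\top\tbp=\mu_t\tbp+(1-\mu_t)\Big(\bI_N+\frac{1}{1-\beta_\ell}\tbG_\ell^\top\Big)\tbp,
\]
which is one of the points in the hypothesis. The choice $\mu=1$ (i.e., $t=0$) yields $\tbp$, and $\mu\to0$ yields the fixed-point input.

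The step needing the most care is the second one: matching the scalar gain with $\beta_\ell$, and handling the sign of $\tba_\ell$ and the case where a terminal bus is the reference bus. The remaining steps are direct algebra. If a general initialization were required, I would note that $s_0\neq0$ only shifts $\mu$ outside $[0,1]$. I would therefore keep the flat-start assumption that is implicit in the lemma.
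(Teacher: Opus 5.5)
Your proposal is correct and is essentially the paper's argument. The paper unrolls the vector recursion $\tbp_\ell^{t+1}=\tbG_\ell^\top\tbp_\ell^t+\tbp$ using $\tbG_\ell^\tau=\beta_\ell^{\tau-1}\tbG_\ell$ and a geometric sum to obtain the convex combination with weight $\beta_\ell^t$. Your scalar recursion $s_{t+1}=f_0+\beta_\ell s_t$ is that same rank-one reduction made explicit, and it has the advantage of stating the initialization, which the paper's unrolling leaves unstated (it tacitly takes $\tbp_\ell^0=\tbp$, i.e., the basecase start).

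One small correction to your closing remark: a nonzero $s_0$ need not push $\mu$ outside $[0,1]$. The basecase start $\tbtheta_\ell^0=\tbB_0^{-1}\tbp$ gives $s_0=f_0$ and $\mu=\beta_\ell^{t+1}$. More generally, any $s_0$ between $0$ and $f_0/(1-\beta_\ell)$ keeps every iterate among the listed points.
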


Based on Lemma~\ref{le:Cp'}, we next provide an outer approximation of $\mcC_p'$ for the case that $\mcC_p$ takes a polytopic form as
\begin{equation}\label{eq:Cp-polytope}
\mcC_p:=\left\{\tbp:\bC_p\tbp\leq\bd_p\right\}.
\end{equation}
where $\bC_p\in\mathbb{R}^{I_p\times N}$. This polytopic form is reasonable, since $\mcC_p$ can be specified by imposing limits on active injections and line flows, according to \eqref{eq:f=Sp}.

According to Lemma~\ref{le:Cp'}, the set $\mcC_p'$ is the union of $L$ polytopes. Because computing $\mcC_p'$ exactly is difficult, we resort to an outer convex approximation $\hat{\mcC}_p$ of $\mcC_p'$, so that $\mcC_p\subset \mcC_p'\subseteq \hat{\mcC}_p$. Aiming for a computationally tractable solution, we assume $\hat{\mcC}_p$ is also a polytope described by the same matrix $\bC_p$ as
\begin{equation}\label{eq:Cp-hat}
\hat{\mcC}_p:=\left\{\tbp:\bC_p\tbp\leq\hbd_p\right\}.
\end{equation}
The goal is to determine $\hbd_p$ so that $\mcC_p'\subseteq\hat{\mcC}_p$. A sufficient condition for this to happen is that
\begin{itemize}
    \item[\emph{c1)}] $\tbp\in\hat{\mcC}_p$ for all $\tbp\in\mcC_p$; and
    \item[\emph{c2)}] $(\bI_N+\frac{1}{1-\beta_\ell}\tbG_\ell)^\top\tbp\in\hat{\mcC}_p$ for all $\tbp\in\mcC_p$ and for all $\ell$.
\end{itemize}

For condition \emph{c1)} to be met, we must ensure that $\hbd_p\geq \bd_p$. Regarding \emph{c2)}, we first compute the following value for each constraint $i\in\{1,\ldots,I_p\}$ defining the polytope $\mcC_p$:
\begin{equation}\label{eq:gammapi}
\gamma_{p,i}=\max_\ell\max_{\bC_p\tbp\leq \bd_p}\be_i^\top \bC_p\left(\bI_N+\frac{1}{1-\beta_\ell}\tbG_\ell\right)^\top\tbp.
\end{equation}
To meet \emph{c2)}, we must select the $i$-th entry of $\hbp_P$ as $\hat{d}_{p,i}\geq \gamma_{p,i}$. To meet both \emph{c1)} and \emph{c2)}, we can set
\[\hat{d}_{p,i}=\max\{d_{p,i},\gamma_{p,i}\},\]
where $\tilde{d}_{p,i}$ is the $i$-th entry of $\bd_p$. Therefore, computing $\hbd_p$ entails solving the $I_p\times L$ linear programs (LPs) in \eqref{eq:gammapi}. The set $\hat{\mcC}_p$ is an outer approximation of $\mcC_p'$ because the latter may not necessarily include the entire line segment between $(\bI_N+\frac{1}{1-\beta_\ell}\tbG_\ell)^\top\tbp$ and $(\bI_N+\frac{1}{1-\beta_{\ell'}}\tbG_{\ell'})^\top\tbp$ for all line pairs $(\ell,\ell')$.

A similar analysis can be carried out for reactive power injections to obtain $\mcC_q'$ from $\mcC_q$ to obtain the ensuing result.

\begin{lemma}\label{le:Cq'}
According to the $Q{-}V$ model, if the training set $\mcC_q'$ includes the points 
\[\mu\bq_\mcP+(1-\mu)\left(
\left(\bI+\frac{\bG_\ell^\top}{1-\lambda_\ell}\right)\bq_{\mcP}
+\left(\frac{\ba_{\ell\mcP}\ba_{\ell\mcV}^{\top}}{x_\ell}-\bG_\ell^\top\bB_{\mcP\mcV}\right)\bu_{\mcV}\right)
\]
for all $\mu\in[0,1]$, $\bq_\mcP\in\mcC_q$, and $\ell$, then $\mcC_q\subset \mcC_q'$.
\end{lemma}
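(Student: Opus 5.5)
The plan is to mirror the argument behind Lemma~\ref{le:Cp'}: write the modified reactive specifications fed to the basecase model along the $Q{-}V$ fixed-point iteration in closed form, and show that every iterate lies on the segment between $\bq_\mcP$ and the modified specification at the fixed point. Under the DC $Q{-}V$ model, the compensation trick gives $\bB_0\bu=\bq+\frac{1}{x_\ell}\ba_\ell\ba_\ell^\top\bu$. Restricted to PQ buses, the specification handed to the basecase solver at iteration $t$ is therefore
\[
\bq_\mcP^t:=\bq_\mcP+\bs^t,\qquad \bs^t:=\frac{1}{x_\ell}\ba_{\ell_\mcP}\bigl(\ba_{\ell_\mcP}^\top\bu_\mcP^t+\ba_{\ell_\mcV}^\top\bu_\mcV\bigr).
\]
The PV voltage magnitudes $\bu_\mcV$ are fixed specifications and are unchanged, consistent with $\mcC_u'=\mcC_u$. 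The basecase solve $\bu_\mcP^{t+1}=\bB_{\mcP\mcP}^{-1}(\bq_\mcP^t-\bB_{\mcP\mcV}\bu_\mcV)$ reproduces \eqref{eq:dcqv-fixed-point}.

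The key observation is that $\bs^t$ always points along $\ba_{\ell_\mcP}$. Its evolution is thus governed by the scalar $\sigma^t:=\ba_{\ell_\mcP}^\top\bu_\mcP^t$. Left-multiplying the update by $\ba_{\ell_\mcP}^\top$ and using \eqref{eq:lambda-qv} gives the scalar affine recursion
\[
\sigma^{t+1}=\lambda_\ell\,\sigma^t+\lambda_\ell\,\ba_{\ell_\mcV}^\top\bu_\mcV+\ba_{\ell_\mcP}^\top\bB_{\mcP\mcP}^{-1}(\bq_\mcP-\bB_{\mcP\mcV}\bu_\mcV).
\]
By Theorem~\ref{th:qvcontraction}, $0\le\lambda_\ell<1$, so this recursion has the unique fixed point $\sigma^\star$ obtained by dividing the constant term by $1-\lambda_\ell$.

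Next, initialize at the flat profile $\bu_\mcP^0=\bzero$, which the paper already uses to argue $\mcC\subset\mcC'$. The recursion then gives $\sigma^t=(1-\lambda_\ell^t)\sigma^\star$, and substituting back yields
\[
\bs^t=(1-\mu_t)\,\bs^\star+\mu_t\cdot\frac{1}{x_\ell}\ba_{\ell_\mcP}\ba_{\ell_\mcV}^\top\bu_\mcV,\qquad \mu_t:=\lambda_\ell^t\in[0,1].
\]
Hence every iterate $\bq_\mcP^t$ is a convex combination, with weight $\mu=\mu_t$, of the $t=0$ specification and the limiting specification $\bq_\mcP+\bs^\star$. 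It remains to expand $\bs^\star$ using $\bB_{\mcP\mcP}$ symmetric, so that $\bG_\ell^\top=\frac{1}{x_\ell}\ba_{\ell_\mcP}\ba_{\ell_\mcP}^\top\bB_{\mcP\mcP}^{-1}$. This expansion produces the term $\frac{1}{1-\lambda_\ell}\bG_\ell^\top\bq_\mcP$ together with terms linear in $\bu_\mcV$ built from $\frac{1}{x_\ell}\ba_{\ell_\mcP}\ba_{\ell_\mcV}^\top$ and $\bG_\ell^\top\bB_{\mcP\mcV}$, which is the endpoint displayed in the lemma. Covering all $\bq_\mcP\in\mcC_q$, all $\ell$, and all $\mu\in[0,1]$ then shows that $\mcC_q'$ contains every modified specification visited by the iteration. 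In particular, taking $\mu=1$ with $\bu_\mcV=\bzero$ recovers $\bq_\mcP$, giving $\mcC_q\subset\mcC_q'$.

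The main obstacle is the bookkeeping of the $\bu_\mcV$-dependent terms. At $t=0$ the compensation term $\frac{1}{x_\ell}\ba_{\ell_\mcP}\ba_{\ell_\mcV}^\top\bu_\mcV$ is already nonzero. This contrasts with the $P{-}\theta$ case, where the $t=0$ specification is exactly $\tbp$. The scaling of the $\bu_\mcV$ terms, in particular whether a factor $1/(1-\lambda_\ell)$ multiplies them, depends on how this offset is absorbed into the segment endpoints. I expect to have to check carefully that the convex-combination form matches the stated endpoint. If it does not match exactly, I would redefine the starting point so that it absorbs the $t=0$ offset, or equivalently initialize at $\bu_\mcP^0$ with $\ba_{\ell_\mcP}^\top\bu_\mcP^0=-\ba_{\ell_\mcV}^\top\bu_\mcV$, so that $\bs^0=\bzero$. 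With that choice the clean segment structure holds exactly.
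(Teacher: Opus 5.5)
Your scalar reduction via $\sigma^t$ is correct, and it follows the paper's route. The paper proves Lemma~\ref{le:Cq'} only by saying to repeat the proof of Lemma~\ref{le:Cp'}: unroll the affine recursion of the modified injections and use that $\bG_\ell$ has rank one. The gap is your claim that expanding $\bs^\star$ gives ``the endpoint displayed in the lemma.'' It does not.

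Set $\mathbf{H}_\ell:=\frac{1}{x_\ell}\ba_{\ell_\mcP}\ba_{\ell_\mcV}^\top-\bG_\ell^\top\bB_{\mcP\mcV}$. From your own $\sigma^\star$, $(1-\lambda_\ell)(\sigma^\star+\ba_{\ell_\mcV}^\top\bu_\mcV)=\ba_{\ell_\mcP}^\top\bB_{\mcP\mcP}^{-1}(\bq_\mcP-\bB_{\mcP\mcV}\bu_\mcV)+\ba_{\ell_\mcV}^\top\bu_\mcV$, hence
\[
\bq_\mcP+\bs^\star=\Bigl(\bI+\frac{\bG_\ell^\top}{1-\lambda_\ell}\Bigr)\bq_\mcP+\frac{1}{1-\lambda_\ell}\,\mathbf{H}_\ell\bu_\mcV
=\Bigl(\bI+\frac{\bG_\ell^\top}{1-\lambda_\ell}\Bigr)\bigl(\bq_\mcP+\mathbf{H}_\ell\bu_\mcV\bigr).
\]
The last form holds because every column of $\mathbf{H}_\ell$ is parallel to $\ba_{\ell_\mcP}$, and $\bG_\ell^\top\ba_{\ell_\mcP}=\lambda_\ell\ba_{\ell_\mcP}$. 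The displayed point carries $\mathbf{H}_\ell\bu_\mcV$ without the factor $1/(1-\lambda_\ell)$, so the two differ by $\frac{\lambda_\ell}{1-\lambda_\ell}\mathbf{H}_\ell\bu_\mcV$. Your fallback of re-choosing the initialization cannot remove this. The point $\bq_\mcP+\bs^\star$ is the unique fixed point of the recursion, independent of the start; the start only fixes the other end of the segment. Executing the paper's ``same procedure'' honestly yields this corrected endpoint, not the displayed one.

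The mismatch is substantive, so the coverage property with the displayed endpoint cannot be proved as written. Take buses $0$ (reference), $1$ (PQ), $2$ (PV), unit-reactance lines $(0,1)$, $(0,2)$, $(1,2)$, and the non-critical contingency $\ell=(1,2)$. Then $\bB_{\mcP\mcP}=2$, $\lambda_\ell=\frac12$, and $\mathbf{H}_\ell\bu_\mcV=\frac12(u_0-u_2)$. For $q_1=u_0=0$ and $u_2=\delta\neq0$, the displayed endpoint is $-\delta/2$. The iteration, started from the basecase solution or from your $\bs^0=\bzero$ start, produces injections $0,-\delta/2,-3\delta/4,\ldots\to-\delta$. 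These leave the segment between $0$ and $-\delta/2$ from $t=2$ on.

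What your argument does prove is $\bs^t=(1-\lambda_\ell^t)\bs^\star$, given the $\bs^0=\bzero$ start. That start is equivalent to starting from the basecase solution, as the paper implicitly does in Lemma~\ref{le:Cp'} via $\tbp_\ell^0=\tbp$. This gives coverage by the segments from $\bq_\mcP$ to $(\bI+\frac{\bG_\ell^\top}{1-\lambda_\ell})(\bq_\mcP+\mathbf{H}_\ell\bu_\mcV)$. State that endpoint rather than asserting a match. As you noticed, the flat start does not give segments beginning at $\bq_\mcP$.

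Finally, the literal conclusion $\mcC_q\subset\mcC_q'$ follows from $\mu=1$ alone. Your added restriction $\bu_\mcV=\bzero$ is unnecessary, because the $\bu_\mcV$ terms are multiplied by $1-\mu=0$.
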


Similar to the $P{-}\theta$ model, we can use Lemma~\ref{le:Cq'} to derive an outer polytopic approximation $\hat{\mcC}_q$ of $\mcC_q'$ if the original specification set $\mcC_q$ admits the polytopic form 
\begin{equation}\label{eq:Cq-polytope}
\mcC_q:=\left\{\bq_{\mcP}:\bC_q\bq_{\mcP}\leq\bd_q\right\}.
\end{equation}
More specifically, we postulate that 
\begin{equation}\label{eq:Cq-hat}
\hat{\mcC}_q:=\left\{\bq_\mcP:\bC_q\bq_\mcP\leq\hbd_q\right\},
\end{equation}
where $\bC_q\in\mathbb{R}^{I_q\times N}$. We then compute the following value for each constraint $i\in\{1,\ldots,I_q\}$ defining $\mcC_q$:
\begin{align*}
\gamma_{q,i}
:=\max_{\ell}\max_{\substack{\bC_q\bq_{\mcP}\leq\bd_q\\
\bu_{\mcV}\in\mcC_u}}&
\be_i^\top\bC_q\left(\bI+\frac{\bG_\ell^\top}{1-\lambda_\ell}\right)\bq_{\mcP}\\
&+
\be_i^\top\bC_q\left(\frac{\ba_{\ell\mcP}\ba_{\ell\mcV}^{\top}}{x_\ell}-\bG_\ell^\top\bB_{\mcP\mcV}\right)\bu_{\mcV}
\end{align*}
Finally, we can set the $i$-th entry of $\hbd_q$ as
\[\hat{d}_{q,i}=\max\{d_{q,i},\gamma_{q,i}\}\]
for all $i\in\{1,\ldots,I_q\}$. This ensures that $\mcC_q'\subseteq \hat{\mcC}_q$, so we can safely use $\hat{\mcC}_q$ as the expanded set for training the ML model. If $\mcC_u$ is also a polytope, then finding all $\gamma_{q,i}$'s entails solving $I_q\times L$ LPs. For example, if $\mcC_u$ stems from lower and upper limits on voltage magnitudes at PV buses, then $\mcC_u$ takes the form of a hyperrectangle. 

In a nutshell, we have provided a practical method to compute the expanded training set $\mcC'$. If the original sets $\mcC_p$ and $\mcC_q$ are polytopes with $I_p$ and $I_q$ constraints, respectively, the operator can determine the sets $\hat{\mcC}_p$ and $\hat{\mcC}_q$ by solving $(I_p+I_q)L$ linear programs. Although $\hat{\mcC}$ is an outer approximation and may be conservative, our numerical tests show that an ML model with only $\mcO(N^2)$ parameters can still be trained accurately over this expanded region. This provides an approximate answer for question \emph{q2)}. We next focus on \emph{q3)}.

\section{Prediction Accuracy of Fixed-Point Solution}
\label{sec:accuracy}
Thus far, we have characterized the convergence of the fixed-point iteration and identified the specification set over which the ML model should be trained. The analysis has presumed that the ML model $F_\bw(\bc)$ approximates well the basecase AC-PF mapping $G_0(\bc)$, so that the ML-based fixed-point iterates of \eqref{eq:dnn-fixed-point} remain close to the AC-PF-based fixed-point iterates of \eqref{eq:pf-fixed-point}. This section establishes that the two trajectories indeed remain close to each other. 

For this analysis, we must quantify the sensitivity of the ML mapping $F_\bw$ to voltage phasors. This sensitivity is formally captured by the Lipschitz constant $L_\ell$ as
\begin{equation}\label{eq:dnn-fixed-point-lipschitz}
\left\|F_\bw\left(\bc+\bg_\ell(\bv)\right)-F_\bw\left(\bc+\bg_\ell(\bv')\right)\right\|_2\leq
L_\ell\|\bv-\bv'\|_2
\end{equation}
for all $\bv$ and $\bv'$ in a voltage region $\mcR$. The region $\mcR$ is defined so that $\bc+\bg_\ell(\bv)$ and $\bc+\bg_\ell(\bv')$ belong to $\mcC'$ for all $\bc\in\mcC$. The constant $L_\ell$ relates to the Jacobian of the ML model
\begin{equation}\label{eq:lipschitz}
L_\ell=\sup_{\bv\in\mcR} \left\|\nabla_{\bv}F_\bw \left(\bc+\bg_\ell(\bv)\right) \right\|_2.
\end{equation}
Our analysis requires that the ML mapping be contracting in $\bv$ as stated next.

\begin{assumption}\label{as:lipschitz}
The ML mapping is a contraction in terms of $\bv$, so that the Lipschitz constant in \eqref{eq:dnn-fixed-point-lipschitz} satisfies $L_\ell<1$ for every $\bc\in\mcC$ and every contingency $\ell$.
\end{assumption}


In~\cite{HICSS2027}, we proposed an SDP formulation for verifying that a DNN trained on the basecase topology satisfies Assumption~\ref{as:lipschitz}. We will further need to quantify the approximation error between the ML model and the AC-PF mapping as formalized in the ensuing assumption. 

\begin{assumption}\label{as:approximationerror}
Suppose the ML model approximates the basecase AC-PF mapping over the set $\mcC'$ with accuracy $\epsilon>0$
\begin{equation}\label{eq:dnn-accuracy}
\|F_\bw(\bc)-G_0(\bc)\|_2\leq\epsilon,\quad\forall \bc\in\mcC'.
\end{equation}
\end{assumption}

We can now bound the error between the ML-based and the AC-PF-based fixed-point trajectories, reproduced here as
\begin{subequations}\label{eq:redofp}
\begin{align}
\bv_\ell^{t+1}&=G_0\left(\bc+\bg_\ell(\bv_\ell^t)\right),\label{eq:redofp:acpf}\\
\hbv_\ell^{t+1}&=F_\bw\left(\bc+\bg_\ell(\hbv_\ell^t)\right).\label{eq:redofp:dnn}
\end{align}
\end{subequations}

\begin{lemma}\label{le:dnn-fixed-point-error}
Under Assumptions \ref{as:lipschitz}--\ref{as:approximationerror}, the Euclidean distance 
\begin{equation}\label{eq:iteration-error}
e_\ell^t:=\|\hbv_\ell^t-\bv_\ell^t\|_2
\end{equation}
between the fixed-point iterates in \eqref{eq:redofp} satisfies 
\begin{equation}\label{eq:error-bound}
e_\ell^t\leq \left(\frac{1-L_\ell^t}{1-L_\ell}\right)\epsilon\leq \frac{\epsilon}{1-L_\ell}
\end{equation}
if the iterates are initialized at the same point $\bv_\ell^0=\hbv_\ell^0$.
\end{lemma}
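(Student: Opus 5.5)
The plan is to derive a one-step recursion for $e_\ell^t$ and unroll it. First I fix $t\geq 0$ and split the difference between the two iterates in \eqref{eq:redofp} by adding and subtracting $F_\bw(\bc+\bg_\ell(\bv_\ell^t))$:
\begin{align*}
\hbv_\ell^{t+1}-\bv_\ell^{t+1}
&=\left[F_\bw\left(\bc+\bg_\ell(\hbv_\ell^t)\right)-F_\bw\left(\bc+\bg_\ell(\bv_\ell^t)\right)\right]\\
&\quad+\left[F_\bw\left(\bc+\bg_\ell(\bv_\ell^t)\right)-G_0\left(\bc+\bg_\ell(\bv_\ell^t)\right)\right].
\end{align*}
The first bracket compares the ML model along the two trajectories. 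The second bracket compares the ML model with the AC-PF solver at the same input, namely the input of the exact trajectory.

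The two brackets are bounded separately. For the first, I take norms and use the Lipschitz bound \eqref{eq:dnn-fixed-point-lipschitz}, which holds with $L_\ell<1$ under Assumption~\ref{as:lipschitz}; this gives at most $L_\ell e_\ell^t$. For the second, $\bc+\bg_\ell(\bv_\ell^t)=\bc_\ell^t$ is a modified specification vector as in \eqref{eq:modifiedspec}, so it lies in $\mcC'$, and Assumption~\ref{as:approximationerror} bounds the bracket by $\epsilon$. The triangle inequality then yields
\[
e_\ell^{t+1}\leq L_\ell\, e_\ell^t+\epsilon .
\]

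With the common initialization $\bv_\ell^0=\hbv_\ell^0$ we have $e_\ell^0=0$. Induction on $t$ then gives
\[
e_\ell^t\leq \epsilon\sum_{k=0}^{t-1}L_\ell^k=\frac{1-L_\ell^t}{1-L_\ell}\,\epsilon .
\]
Since $0\leq L_\ell<1$, we have $1-L_\ell^t\leq 1$, which gives the second inequality in \eqref{eq:error-bound}.

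The main obstacle is justifying that both uses of the assumptions are legitimate at every iteration. The Lipschitz bound \eqref{eq:dnn-fixed-point-lipschitz} needs both $\bv_\ell^t$ and $\hbv_\ell^t$ to lie in the region $\mcR$. Assumption~\ref{as:approximationerror} needs $\bc+\bg_\ell(\bv_\ell^t)\in\mcC'$. I would handle this by appealing to the construction of $\mcC'$ in Section~\ref{sec:region}: the training region was designed to contain every modified specification $\bc_\ell^t$ encountered along the iterations. I would then state explicitly, as an invariance requirement implicit in the definition of $\mcR$, that both trajectories stay in $\mcR$ for all $t$. Beyond this bookkeeping, the argument is a standard perturbed-contraction estimate.
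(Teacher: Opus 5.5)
Your proposal is correct and follows the paper's proof essentially step for step: the same add-and-subtract of $F_\bw(\bc+\bg_\ell(\bv_\ell^t))$, the bounds $L_\ell e_\ell^t$ and $\epsilon$ from \eqref{eq:dnn-fixed-point-lipschitz} and \eqref{eq:dnn-accuracy}, and unrolling $e_\ell^{t+1}\leq L_\ell e_\ell^t+\epsilon$ from $e_\ell^0=0$. The paper also leaves implicit that both trajectories stay in $\mcR$ and that $\bc+\bg_\ell(\bv_\ell^t)\in\mcC'$ at every iteration, and your proposal states this requirement explicitly.
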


Lemma~\ref{le:dnn-fixed-point-error} establishes that although the distance between the two trajectories increases with $t$, it remains bounded. If the AC-PF fixed-point iteration converges to the true post-contingency state, the ML-based fixed-point iteration converges within a neighborhood of that state. According to \eqref{eq:error-bound}, the radius of the neighborhood depends on the prediction accuracy $\epsilon$ of the ML model and its Lipschitz constant $L_\ell$, which varies across contingencies. Because $L_\ell$ may be conservative, the resulting error bound can overestimate the actual distance between the two solutions. The numerical tests of the next section show that the solution of \eqref{eq:dnn-fixed-point} remains close to the post-contingency state across almost all contingencies.

\section{Numerical Tests}\label{sec:tests}
We evaluated the proposed methodology on the 6,717-bus Texas7k benchmark system. To learn the basecase AC-PF mapping, we trained a fully connected DNN with two hidden layers, each containing 6,717 neurons, with ReLU activation. We trained the DNN using the Adam optimizer with a learning rate of $4\times10^{-5}$, and terminated training was the loss gradient norm fell below $10^{-4}$. We conducted all numerical tests in the same RunPod environment on a compute node equipped with an NVIDIA B200 GPU and an Intel Xeon 6960P CPU. 

The Texas7k dataset provides one year of operating scenarios at a 5-min resolution. Using $k$-medoids clustering, we selected 2,000 representative scenarios. We selected 200 of these to define the specification set $\mcC$. We constructed the corresponding polytope using voltage, power injection, and line flow limits from these 200 scenarios. Following the procedure described in Sec.~\ref{sec:region}, we constructed the set $\hat{\mcC}$ to represent the training region. We generated 120,000 scenarios from $\hat{\mcC}$ and combined them with the remaining 1,800 basecase scenarios to train the DNN on both basecase operating conditions and the specifications encountered during the iterations of \eqref{eq:dnn-fixed-point}.

\begin{figure}[t]
\centering
\includegraphics[width=0.9\linewidth]
{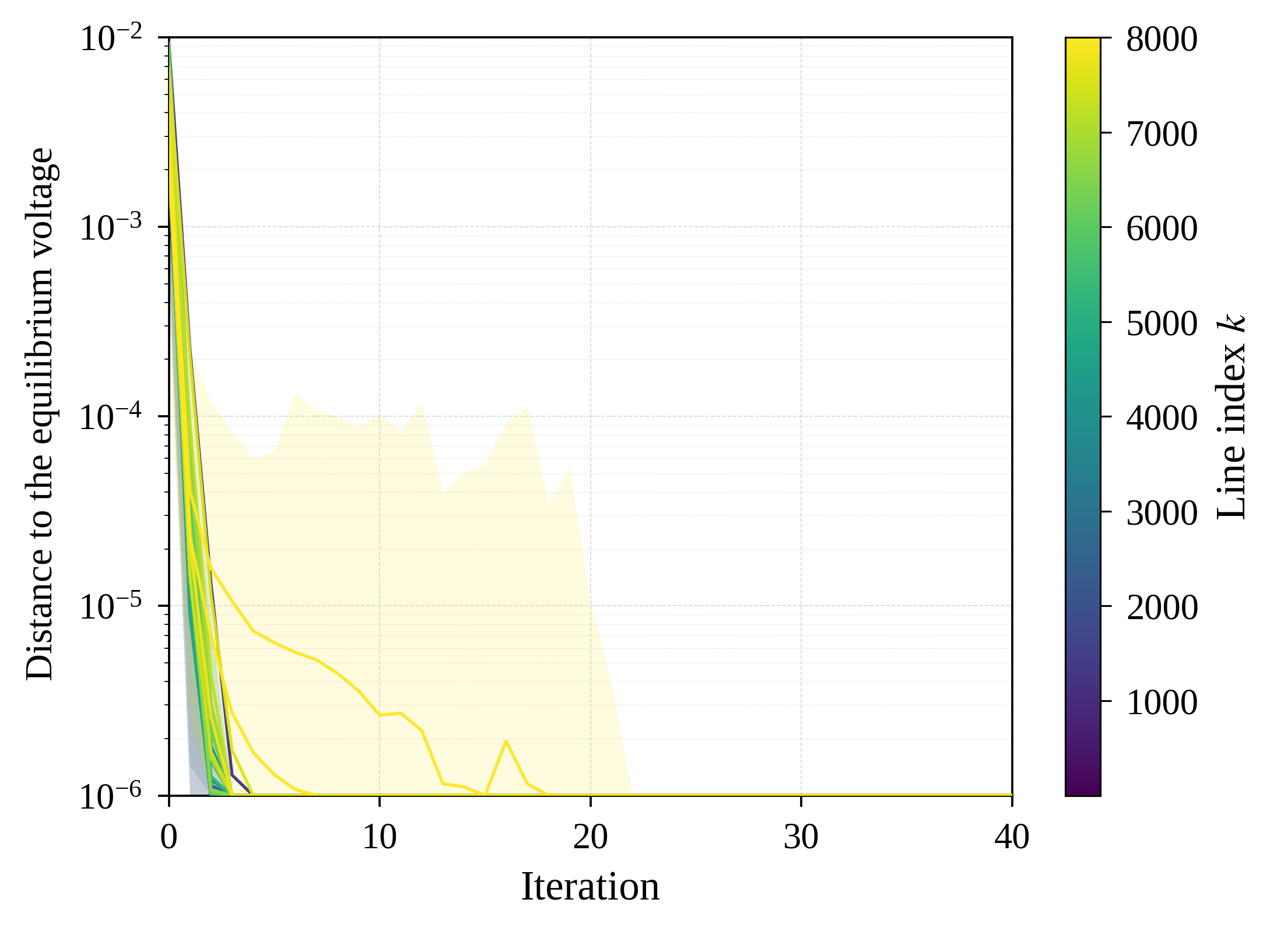}
\vspace*{-1em}
\caption{Convergence of the DNN fixed-point iteration to its equilibrium with Anderson acceleration across different contingencies. The contingencies were ordered according to empirical convergence rate and subsampled every 300.}
\label{fig:texas_fp_convergence}
\end{figure}

We first examined the convergence of \eqref{eq:dnn-fixed-point}. For every scenario, the iteration was initialized at the basecase DNN solution, and convergence was declared when the successive voltage difference norm fell below $10^{-5}$. To accelerate the fixed-point iteration, we applied Anderson acceleration (AA) with memory 3; see~\cite{walker2011anderson}. We ordered line contingencies by their iteration count to convergence, averaged across scenarios. For cleaner visualization, we sampled the ranked contingencies in increments of 300 and included the last (slowest) one. Figure~\ref{fig:texas_fp_convergence} shows the Euclidean distance between the predicted and the true post-contingency AC states across iterations. All contingencies converged within 20 iterations. 

\begin{figure}[t]
\centering
\includegraphics[width=0.95\linewidth]
{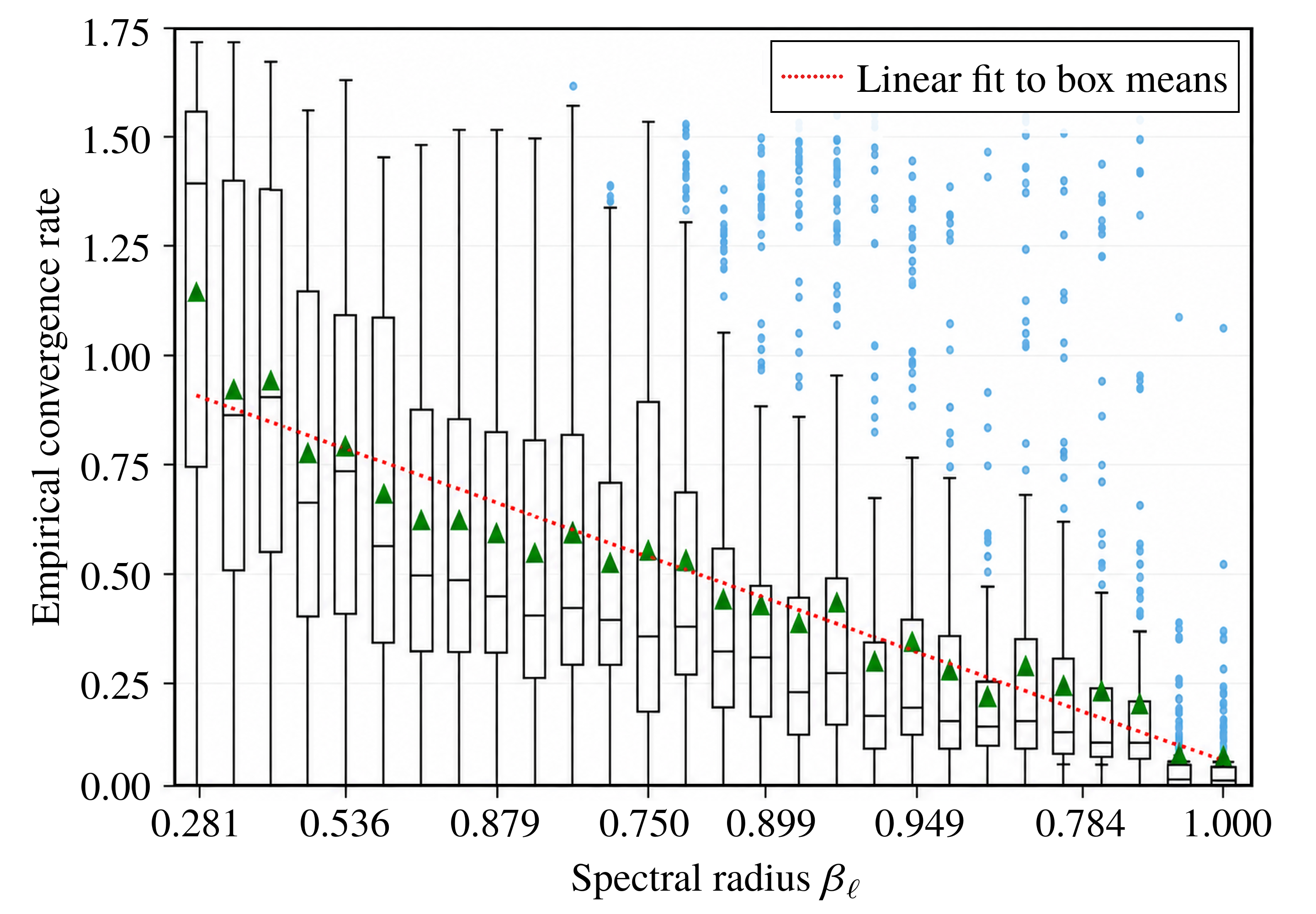}
\vspace*{-1em}
\caption{Distribution of empirical convergence rate over testing scenarios across subsampled contingencies, plotted against the corresponding spectral radii $\beta_\ell$.}
\label{fig:texas_ptheta_rate}
\end{figure}

In the second test, we explored the relationship between the empirical convergence rate and the spectral matrix radius of the $P{-}\theta$ model. We ordered contingencies by $\beta_\ell$ and sampled them again in increments of 300. The empirical convergence rate was obtained from the negative slope of the fitted line to the convergence curve in Fig.~\ref{fig:texas_fp_convergence}. Figure~\ref{fig:texas_ptheta_rate} shows the resulting distribution of empirical rates. In general, the fitted trend decreases as the radius increases, which agrees with Th.~\ref{th:pthetacontraction}. 



\begin{figure}[t]
\centering
\includegraphics[width=.95\linewidth]{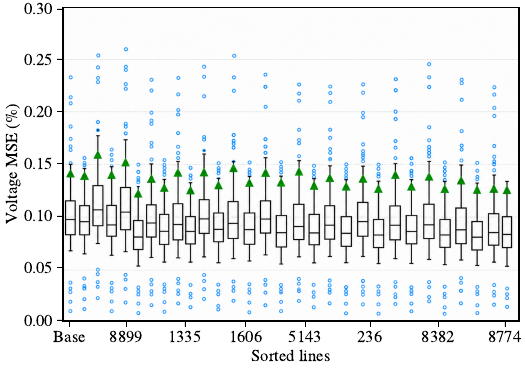}
\includegraphics[width=.95\linewidth]{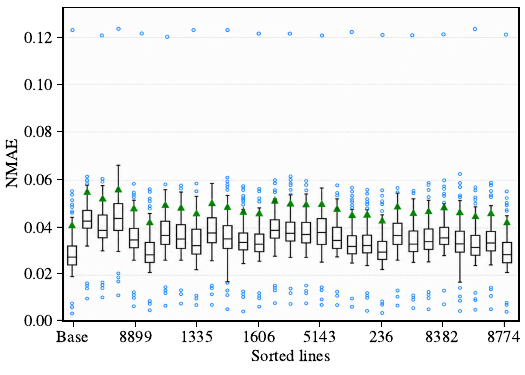}
\vspace*{-1em}
\caption{Box plots of the MSE on predicted voltages (top) and NMAE on predicted AC-PF specifications (bottom) across tested scenarios, for the basecase and 29 contingencies. The method features similar errors across the basecase and the selected contingencies.}
\label{fig:texas_mse_nmae}
\end{figure}

To assess prediction accuracy, the top panel of Fig.~\ref{fig:texas_mse_nmae} shows the MSE between the true and predicted post-contingency states across scenarios for the basecase and 29 contingencies, ordered by $\beta_\ell$ and subsampled every 300. The bottom panel shows the NMAE of the predicted AC-PF specifications, obtained by substituting the predicted states into the AC-PF equations. The MSE remains below 0.3\% and NMAE below 6\% across contingencies, comparable to the basecase. 


\begin{figure}[t]
\centering
\includegraphics[width=.95\linewidth]{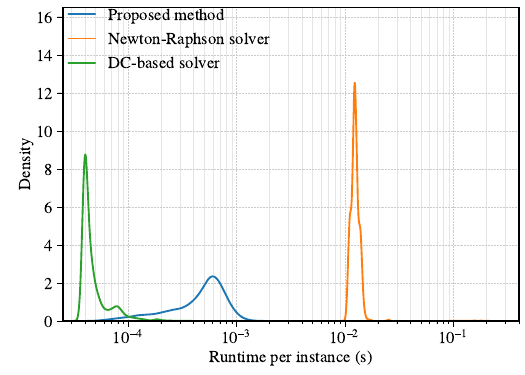}
\includegraphics[width=.95\linewidth]{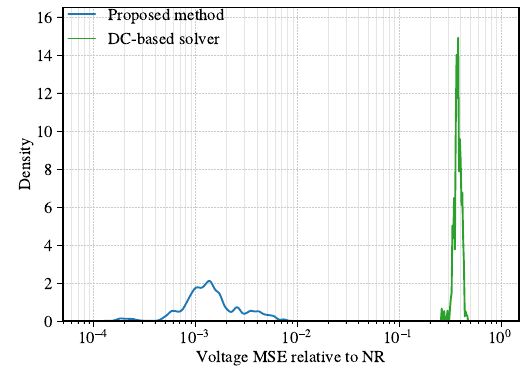}
\vspace*{-1em}
\caption{The proposed method offers a favorable trade-off between runtime (top) and accuracy (bottom) when compared to the DC solver against the NR solution. Histograms are computed across scenarios and Texas7k contingencies on the  system.}
\label{fig:texas_runtime+accuracy}
\end{figure}



Finally, we compared the compute performance of the proposed approach with NR and the DC-based solver over all testing scenarios and non-critical contingencies, totaling 1.6\,M instances. The DNN fixed-point iteration with Anderson acceleration was evaluated on an NVIDIA B200 GPU, while NR and the DC-based solver were run on the CPU, with NR contingencies parallelized across 8 workers. 
The top panel of Fig.~\ref{fig:texas_runtime+accuracy} compares the per-instance runtime distributions. The proposed method is substantially faster than NR, while the DC-based solver remains fastest. However, this speed comes at the expense of accuracy. The bottom panel of Fig.~\ref{fig:texas_runtime+accuracy} compares the voltage MSE distributions relative to the NR solution, showing substantially lower errors for the proposed method than for the DC-based solver. The proposed method offers a favorable tradeoff between the speed of the DC model and NR's accuracy.

\section{Conclusions and Future Work}\label{sec:conclusions}
We have put forth a fixed-point formulation for AC contingency analysis using a single ML model trained solely on basecase data. We established convergence guarantees and characterized convergence rates under the DC-PF approximation. We also derived an expanded training region to ensure the DNN accurately predicts AC-PF throughout the fixed-point iterations. We further established bounds on the resulting prediction errors. Numerical tests on the Texas7k system demonstrate convergence across all non-critical line outages, with prediction errors remaining close to the basecase error. An interesting direction for future work is to extend the formulation to $N-k$ contingencies. 

\appendix


\begin{IEEEproof}[Proof of Theorem~\ref{th:ptdf}]
Consider a unit power transfer from the source bus $i$ to the sink bus $j$, so that $\bp=\be_i-\be_j$ in \eqref{eq:ptheta}. By definition, the power flow on any line $\ell\in\mcL_0$ is
$\ptdf_{\ell,i,j}$. 

We will first show that $\theta_i$ is the largest voltage angle and $\theta_j$ is the smallest voltage angle across all buses. Recall that the power injection into any bus $k\in\mcN$ satisfies [cf.~\eqref{eq:ptheta}]
\begin{equation}\label{eq:pk}
p_k=\sum_{o:(k,o)\in\mcL_0}\frac{\theta_{k}-\theta_{o}}{x_{ko}}.
\end{equation}
Because $p_k=0$ for all $k\neq i,j$, it follows that $\theta_k$ cannot be the largest (smallest) angle because otherwise all numerators in the sum of \eqref{eq:pk} would be positive (negative). Therefore, it is either $\theta_i>\theta_k>\theta_j$ or $\theta_j>\theta_k>\theta_i$ for all $k\neq i,j$. Only the first option can explain $p_i=1$ and $p_j=-1$ in \eqref{eq:pk}.

Consider now the terminal buses of line $\ell=(m,n)$ and suppose $\theta_m>\theta_n$. Select a voltage angle value $\theta_0\in(\theta_n,\theta_m)$, and define the set of buses $\mcS:=\left\{q\in\mcN:\theta_q\geq\theta_0\right\}$. By construction, bus $m$ belongs to $\mcS$, and bus $n$ belongs to its complement set $\bmcS$. It also holds that $i\in\mcS$ and $j\in\bmcS$. 

Using \eqref{eq:pk}, sum the power injections into the buses in $\mcS$
\begin{equation*}
\sum_{k\in\mcS}p_k=
\sum_{k\in\mcS}\sum_{o\in\mcS:(k,o)\in\mcL_0}\frac{\theta_k-\theta_o}{x_{ko}}+\sum_{k\in\mcS}\sum_{r\in\bmcS:(k,r)\in\mcL_0}\frac{\theta_k-\theta_r}{x_{kr}}.
\end{equation*}
Each $p_k$ is expressed as a sum of the line flows leaving bus $k$. Flows are grouped into those terminating at a bus within $\mcS$, and those terminating at a bus within $\bmcS$. The first group of flows internal to $\mcS$ has zero sum. This is because every flow $(\theta_k-\theta_o)/x_{ko}$ is canceled by its reverse flow $(\theta_o-\theta_k)/x_{ko}$. Hence,
\begin{equation*}
\sum_{\substack{(k,r)\in\mcL_0\\ k\in\mcS,r\in\bmcS}}\frac{\theta_k-\theta_r}{x_{kr}}=1.
\end{equation*}
By the definition of $\mcS$, it holds $\theta_k>\theta_r$ for all $k\in\mcS$ and $r\in\bmcS$, so that all terms in the previous summand are nonnegative. Given they sum up to one, all terms must be smaller than one. The line of interest $\ell$ belongs to this cut, and so $f_\ell=(\theta_m-\theta_n)/x_{mn}\in(0,1]$.

Thus far, we assumed that $\theta_m>\theta_n$. If $\theta_m<\theta_n$, then it follows similarly that $i,n\in\mcS$ and $j,m\in\bmcS$. In this case, it follows that $f_\ell\in[-1,0)$. Combining the two cases proves that $\left|\ptdf_{\ell,i,j}\right|\leq 1$. The claim holds even if $i=m$ and $j=n$ for the self-PTDF. If line $\ell$ is the only line crossing the cut between $\mcS$ and $\bmcS$, then $\left|\ptdf_{\ell,i,j}\right|= 1$ and $\ell$ constitutes a critical line. 
\end{IEEEproof}

\begin{IEEEproof}[Proof of Theorem~\ref{th:pthetacontraction}]
The matrix $\tbG_\ell$ has rank one. Hence, its unique nonzero eigenvalue equals the trace of $\tbG_\ell$:
\begin{align*}
\tilde{\lambda}_\ell&=\frac{\tba_\ell^\top\tbB_0^{-1}\tba_\ell}{x_\ell}=\frac{\tba_\ell^\top\tbB_0^{-1}\be_m}{x_\ell}
-
\frac{\tba_\ell^\top\tbB_0^{-1}\be_n}{x_\ell}\\
&=S_{\ell,m}-S_{\ell,n}=\ptdf_{\ell,m,n}=\beta_\ell.
\end{align*}
The first equality yields $\tilde{\lambda}_\ell>0$ because $\tbB_0$ is a positive definite matrix. The second equality follows from the definition of $\tba_\ell$ in \eqref{eq:al} together with the convention for the reference bus, whereby $S_{\ell,m}$ or $S_{\ell,n}$ is set to zero whenever bus $m$ or $n$ is the reference bus, respectively. The third equality follows from the definition of $\bS$ in \eqref{eq:f=Sp}. The fourth equality follows from the PTDF definition in \eqref{eq:ptdf}. By Theorem~\ref{th:ptdf}, we get 
\[\tilde{\lambda}_\ell=\ptdf_{\ell,m,n}=\beta_\ell<1.\]

Subtracting \eqref{eq:dcpf-equilibrium} from \eqref{eq:dcpf-fixed-point} provides
\[
\btheta_\ell^{t+1}-\btheta_\ell
=
\tbG_\ell(\btheta_\ell^t-\btheta_\ell).
\]
Iterating this recursion yields
\[\btheta_\ell^{t+1}-\btheta_\ell
=
\tbG_\ell^t(\btheta_\ell^0-\btheta_\ell)
=
\beta_\ell^t\,
\tbG_\ell(\btheta_\ell^0-\btheta_\ell),
\]
where the second equality follows from the rank-one property of $\tbG_\ell$. Taking Euclidean norms on both sides gives
\[\|\btheta_\ell^t-\btheta_\ell\|_2 \le \beta_\ell^{t-1} \|\tbG_\ell(\btheta_\ell^0-\btheta_\ell)\|_2.\]
Hence, the fixed-point iteration in \eqref{eq:dcpf-fixed-point} converges to $\btheta_\ell$ because $\beta_\ell<1$, provided that $\tbG_\ell(\btheta_\ell^0-\btheta_\ell)\neq\bzero$. By the definition of $\tbG_\ell$ in \eqref{eq:dcpf-equilibrium}, the condition $\tbG_\ell(\btheta_\ell^0-\btheta_\ell)\neq\bzero$ is equivalent to $\tba_\ell^\top(\btheta_\ell^0-\btheta_\ell)\neq0$.
\end{IEEEproof}

\begin{IEEEproof}[Proof of Theorem~\ref{th:qvcontraction}]
We study three cases, depending on the types of terminal buses on the contingent line $\ell=(m,n)$. 

\emph{Case 1: Both terminal buses are PV $(m,n\in\mcV)$.} Because $\ba_{\ell_{\mcP}}=\bzero$, we get $\bG_\ell=\bzero$, $\lambda_\ell=0$, and hence, the fixed-point iteration in \eqref{eq:dcqv-fixed-point} reaches the solution after a single update.

\emph{Case 2: Both terminal buses are PQ $(m,n\in\mcP)$.} Then
\begin{equation*}
\ba_{\ell_{\mcP}}=\be_m-\be_n.
\end{equation*}
To relate the quantity in \eqref{eq:lambda-qv} to the self-PTDF, partition the reduced bus susceptance matrix into PQ and PV buses:
\begin{equation*}
\tbB_0
=
\begin{bmatrix}
\bB_{\mcP\mcP} & \tbB_{\mcP\mcV}\\
\tbB_{\mcV\mcP} & \tbB_{\mcV\mcV}
\end{bmatrix}.
\end{equation*}
The reference bus has been excluded here. Let its inverse be partitioned conformably as
\begin{equation*}
\tbC_0:=\tbB_0^{-1}=
\begin{bmatrix}
\bC_{\mcP\mcP} & \tbC_{\mcP\mcV}\\
\tbC_{\mcV\mcP} & \tbC_{\mcV\mcV}
\end{bmatrix}.
\end{equation*}
The block-inverse formula provides that
\begin{equation}\label{eq:Cpp-schur}
\bC_{\mcP\mcP}=\left(\bB_{\mcP\mcP}-\tbB_{\mcP\mcV}
\tbB_{\mcV\mcV}^{-1}\tbB_{\mcV\mcP}\right)^{-1}.
\end{equation}
Since $\tbB_0\succ\bzero$, its Schur complement satisfies
\begin{equation*}
\bzero\prec\bB_{\mcP\mcP}-\tbB_{\mcP\mcV}
\tbB_{\mcV\mcV}^{-1}\tbB_{\mcV\mcP}\preceq\bB_{\mcP\mcP}.
\end{equation*}
Inverting the two matrices yields $\bB_{\mcP\mcP}^{-1}\preceq\bC_{\mcP\mcP}$, and so $\ba_{\ell_{\mcP}}^\top\bB_{\mcP\mcP}^{-1}\ba_{\ell_{\mcP}}\leq \ba_{\ell_{\mcP}}^\top\bC_{\mcP\mcP}
\ba_{\ell_{\mcP}}$. Therefore, we get that
\begin{align*}
\lambda_\ell&=\frac{1}{x_\ell}\ba_{\ell_{\mcP}}^\top\bB_{\mcP\mcP}^{-1}
\ba_{\ell_{\mcP}}\leq \frac{1}{x_\ell}\ba_{\ell_{\mcP}}^\top\bC_{\mcP\mcP}\ba_{\ell_{\mcP}}\\
&=\frac{1}{x_\ell}\ba_{\ell}^\top\tbC_0\ba_{\ell}=\frac{1}{x_\ell}\tba_{\ell}^\top\tbB_0^{-1}\tba_{\ell}=\ptdf_{\ell,m,n}<1.
\end{align*}
The second equality holds because all the non-zero entries of $\tba_\ell$ are included in the subvector $\ba_{\ell_{\mcP}}$. Because matrix $\bB_{\mcP\mcP}$ is positive semidefinite, we also get that $\lambda_\ell\geq 0$. 


\emph{Case 3: One terminal bus is PQ and the other is PV $(m\in\mcP,n\in\mcV)$.} In this case, we get that $\ba_{\ell_{\mcP}}=\pm\be_m$ and
\begin{equation*}
\lambda_\ell=\frac{1}{x_\ell}\be_m^\top\bB_{\mcP\mcP}^{-1}\be_m.
\end{equation*}
The matrix $\bB_{\mcP\mcP}$ can be decomposed as
\begin{equation}\label{eq:Bpp-decomposition}
\bB_{\mcP\mcP}
=\frac{1}{x_\ell}
\be_m\be_m^\top+
\sum_{k\in\mcL_0\setminus\{\ell\}}
\frac{1}{x_k}
\ba_{k_{\mcP}}
\ba_{k_{\mcP}}^\top.
\end{equation}
Because line $\ell$ is non-critical, the power system remains connected after its removal. Therefore, the second term on the right-hand side of \eqref{eq:Bpp-decomposition} is a positive definite matrix, and so
\begin{equation*}
\frac{1}{x_\ell}
(\be_m^\top\bz)^2<\bz^\top
\bB_{\mcP\mcP}
\bz
\end{equation*}
for any $\bz\neq\bzero$. 
According to the variational identify for a positive definite matrix, we have that
\begin{equation*}
\be_m^\top
\bB_{\mcP\mcP}^{-1}
\be_m
=
\max_{\bz\neq\bzero}
\frac{
(\be_m^\top\bz)^2
}{
\bz^\top\bB_{\mcP\mcP}\bz
}<
x_\ell,
\end{equation*}
which implies that $\lambda_\ell\in(0,1)$.

We have established that for all three cases of terminal bus types, the spectral radius of $\bG_\ell$ satisfies $0\leq \lambda_\ell<1$ for any non-critical line. Finally, subtracting \eqref{eq:dcqv-equilibrium} from \eqref{eq:dcqv-fixed-point} and using the rank-one identity $\bG_\ell^t=\lambda_\ell^{t-1}\bG_\ell$ yields \eqref{eq:qv-rate}. 
\end{IEEEproof}

\begin{IEEEproof}[Proof of Lemma~\ref{le:Cp'}]
According to the $P{-}\theta$ model, the modified active power injections under specification $\tbp$ and contingency $\ell$ follow the discrete-time affine system
\begin{equation}\label{eq:p-fixed-point}
\tbp_\ell^{t+1}=\tbG_\ell^\top\tbp_\ell^t+\tbp.
\end{equation}
This recursion can be derived from \eqref{eq:dcpf-fixed-point} if we define $\tbp_\ell^t:=\tbB_0\tbtheta_\ell^t$ for all $t$. The goal is to characterize the set of vectors $\tbp_\ell^t$ for all contingencies $\ell$, all iterations $t$, and every $\tbp\in\mcC_p$. To this end, unrolling \eqref{eq:p-fixed-point} yields
\begin{align}\label{eq:p-trajectory}
\tbp_\ell^t&= 
\left(\bI+\sum_{\tau=1}^{t}\tbG_\ell^\tau\right)^\top\tbp=\left(\bI+\sum_{\tau=1}^{t}\beta_\ell^{\tau-1}\tbG_\ell\right)^\top\tbp\nonumber\\
&=
\left(\bI+\frac{1-\beta_\ell^t}{1-\beta_\ell}\tbG_\ell
\right)^\top\tbp=\beta_\ell^t\tbp+\left(1-\beta_\ell^t\right)\tbM_\ell\tbp.
\end{align}
where $\tbM_\ell:=\bI_N+\frac{1}{1-\beta_\ell}\tbG_\ell^\top$. The second equality follows because $\tbG_\ell$ has a single non-zero eigenvalue $\beta_\ell\in(0,1)$, and so $\tbG_\ell^\tau
=\beta_\ell^{\tau-1}\tbG_\ell$; the third stems from the geometric series finite sum; and the fourth follows upon rearranging terms.

Equation \eqref{eq:p-trajectory} shows that $\tbp_\ell^t$ is a convex combination of $\tbp$ and $\tbM_\ell\tbp$. Hence, for every $\tbp\in\mcC_p$, the training set $\mcC_p'$ should include the line segment between $\tbp$ and $\tbM_\ell\tbp$ for all $\ell$. Lemma \ref{le:Cq'} can be proved by exactly following the same procedure.
\end{IEEEproof}

\begin{IEEEproof}[Proof of Lemma~\ref{le:dnn-fixed-point-error}] We first add and subtract the vector $F_\bw\left(\bc+\bg_\ell(\bv_\ell^t)\right)$ from $\hbv_\ell^{t+1}-\bv_\ell^{t+1}$. Applying the triangle inequality to the resultant vector norm provides
\begin{align*}
e_\ell^{t+1}&\leq \left\|
F_\bw\left(\bc+\bg_\ell(\hbv_\ell^t)\right)-
F_\bw\left(\bc+\bg_\ell(\bv_\ell^t)\right)\right\|_2\\
&\quad\quad  +
\left\|F_\bw\left(\bc+\bg_\ell(\bv_\ell^t)\right)-
G_0\left(\bc+\bg_\ell(\bv_\ell^t)\right)\right\|_2.
\end{align*}
The first term is bounded by $L_\ell e_\ell^t$ from
\eqref{eq:dnn-fixed-point-lipschitz}. The second term is bounded by $\epsilon$ from \eqref{eq:dnn-accuracy} for all $\bc+\bg_\ell(\bv)\in\mcC'$. Hence, the Euclidean distance between the two trajectories satisfies $e_\ell^{t+1}\leq L_\ell e_\ell^t + \epsilon$
for all $t$. Unfolding the iteration yields
\begin{equation*}
e_\ell^t\leq L_\ell^t e_\ell^0+\epsilon\sum_{\tau=0}^{t-1}L_\ell^\tau= L_\ell^t e_\ell^0+ \left(\frac{1-L_\ell^t}{1-L_\ell}\right)\epsilon.
\end{equation*}
If the two iterations are initialized at the same voltage point so that $e_\ell^0=0$, the claim in \eqref{eq:error-bound} follows since $L_\ell\in(0,1)$. 
\end{IEEEproof}

\balance
\bibliographystyle{IEEEtran}
\bibliography{myabbrv,power,contingency,kekatos}
\end{document}